\newtheorem{theorem}{Theorem}[section]
\newtheorem{corollary}[theorem]{Corollary}
\newtheorem{lemma}[theorem]{Lemma}
\def\EE{\mathbb{E}}
\def\PP{\mathbb{P}}
\def\NN{\mathbb{N}}
\def\RR{\mathbb{R}}
\def\bs{\mathbf{s}}
\def\ind{{\rm 1\hspace{-0.90ex}1}}
\newcommand{\Bin}{\mathsf{Bin}}
\newcommand{\cT}{\mathcal{T}}
\newcommand{\cS}{\mathcal{S}}
\newcommand{\cI}{\mathcal{I}}
\newcommand{\cD}{\mathcal{D}}
\newcommand{\cR}{\mathcal{R}}
\newcommand{\fR}{\mathfrak{R}}
\newcommand{\iI}{\mathcal{I}}
\newcommand{\cG}{\mathcal{G}^{(n)}}
 \def\blue{\color{black}}
\begin{document}

%%%%%%%%%%%%%%%%
\title{Epidemic Spreading and Equilibrium Social Distancing in Heterogeneous Networks}

\author{Hamed Amini\thanks{Georgia State University, Atlanta, GA 30303, USA, email: {\tt hamini@gsu.edu}} \ and Andreea Minca\thanks{Cornell University, Ithaca, NY 14850, USA, email: {\tt acm299@cornell.edu}}}

\date{}
\maketitle
\abstract{We study a multi-type SIR epidemic process within a heterogeneous population that interacts through a network. 
We base social contact on a random graph with given vertex degrees and we give limit theorems on the fraction of infected individuals. For given social distancing individual strategies, we  establish the epidemic reproduction number $\fR_0$, which can be used to identify network vulnerability and inform vaccination policies. 
In the second part of the paper we study the equilibrium of the social distancing game. Individuals choose their social distancing level according to an anticipated global infection rate, which  must equal the actual infection rate following their choices. We give conditions for the existence and uniqueness of an equilibrium.   In the case of random regular graphs, we show that voluntary social distancing will always be socially sub-optimal.  

%Our numerical study using  Covid-19 data serves to quantify the absolute and relative  utility gaps across age cohorts.

\bigskip

\noindent {\bf Keywords:} Epidemic risks, multi-type SIR, social distancing game, heterogeneous networks, random graphs with given vertex degrees.
}

\bigskip
%\newpage
\section{Introduction}\label{sec:intro}

The coronavirus crisis has seen an unprecedented scale of lockdown measures,  imposed worldwide and in many cases very strictly in order to mitigate the public health threat.
Unarguably, the economic and social impact has been devastating.
The path to reopening the economy remains uncertain,  and outbreaks may re-emerge as soon as measures are relaxed.
While lockdown measures have been shown to have saved a tremendous number of lives, there is little disagreement that they cannot be maintained in the long run. With a virus so contagious,  widespread and prone to variations, the long run will indeed be measured in years rather than months.
The path forward, at least until vaccines are widely available, is more likely to rely on proper guidelines from the governments, targeted quarantines, combined with the transparent information for the public rather than strict and un-targeted lockdowns.

Despite that the public adherence to guidelines -- even if those were optimal--  is the driving force in the post-lockdown world, few epidemic models incorporate individuals' decision-making. One notable exception is \cite{NBERw26984}, who integrate individual decision making in a Susceptible - Infectious - Recovered (SIR) model of contagion. This paper is closest in spirit to ours, and they demonstrate using U.S. micro-data that individuals started to socially distance earlier than the governments mandated to do so. They also investigate the optimal social distancing policy, and show that this policy should be mandated for as long as possible until herd immunity is achieved through vaccines.
Other works focusing specifically on Susceptible - Infectious - Recovered/Dead (SIRD) modeling in the context of this public health  crisis, e.g., \cite{Acemoglu20}, suggest an agenda to make epidemiological models more realistic, and in particular to address multiple sources of heterogeneity. 
First, it is clear that the disease treats people very differently, and, while no age group is spared,  the elderly have significantly worse outcomes. 
At the same time, the contact pattern is nowhere close to the homogeneous mixing of the classical SIR model. In particular, \cite{Acemoglu20} solve the social planner's problem for a multi-type (multi-risk) SIR model and leave for further research the case where  interaction occurs according to a social graph structure.

Our first set of results is concerned with the size of the epidemic when the social interaction occurs on random graphs with given vertex degrees.
%a given social distancing strategy profile across individuals and for a given network underlying the social interactions (that we call interaction graph). An infection matrix is obtained from the adjacency matrix and the type dependent rates with which susceptible individuals seek social contact.
%Based on the spectrum of the infection matrix, we characterize the amplification of the epidemic, namely the ratio between the fraction of infected individuals during the contagion process and the size of the initial seed. In particular, we show that if the largest singular value of the infection matrix is smaller than one, then the expected amplification is of the order $O(\sqrt{n})$ in the size of the network $n$. 
%This represents a testable condition whether a given interaction graph is prone to contagion and can guide governments where to focus an eventual vaccine or identify potential infection hotspots.
%In the same spirit, we  extend our analysis to the case of random graphs underlying the social interaction. 
 We define the network immunity as the probability that a neighbor of a randomly chosen susceptible individual does not get infected during the epidemic.
We impose  mild conditions on the degree distribution of the susceptible population, whereas infectious and recovered individuals' degree distribution can be  more arbitrary. There is a maximum degree condition, which allows for degrees to grow sub-linearly with the network size. Our results are in this case asymptotic, and for large networks we establish the network immunity as the unique fixed point of an analytic function depending on the degree distributions and the initial seed size.
We then establish the asymptotic limit for the fraction of susceptible individuals. Moreover, we establish the basic reproduction number of the epidemic $\fR_0$ in the context of our model, 
defined as the expected number of  links of those individuals infected by one initial seed. We show that $\fR_0$ characterizes the spread of epidemic in the usual way. If this is larger than one, the epidemic starting from  few initial infectives  is explosive, whereas if it is below one then the epidemic will die out.

Our main results refer to the equilibrium of the social distancing game. 
All individuals choose their strategies and, in equilibrium, the realized network immunity -- determined using our first set of results -- must match the hypothesized network immunity.
The space of social activity levels is discretized, with zero representing (fully) social distancing and the maximum given by a government imposed level.
Individuals are assumed myopic: they make short term decisions, which have lifelong implications or even imply death.
For clarity, we think of the decision process as daily and of course, it suffices to scale our results for weekly, monthly or other short term horizon one deems realistic for the individuals' commitment to their social distancing strategy. 
Individuals derive short-term utility from social activity, and we assume that this scales linearly with their number of contacts. This is counteracted  by the probability of contracting the disease (over the same time horizon), multiplied by the type-specific death (or sequela)  probability given infection. 
The probability of contracting the disease clearly depends on the individual rates of contact and on the aggregate decisions of everyone else, through the network immunity.
We show that there is at most one equilibrium, which can be given semi-analytically.
For the case with two possible decisions, to socially distance or not, and when the graph is regular,  the social utility averaged over the population has a particularly simple form.
For the regular graph case, we show that the voluntary social distancing will always lead to a lower average utility than the social optimum, and this result holds irrespective of the functional dependence of the death rate on network immunity.
Put simply, even when people are in full recognition of the impact on the heath system and health outcomes of having a large outbreak, their decisions will lead to worse utility than in the social optimum. 

%We then proceed to examine numerically the gap between the Laissez-Faire equilibrium and social optimum for our model, calibrated to the Covid-19 current data. Several points emerge from this study: As we increase the fraction of social contact in VSLY, all age groups will practice less and less social distancing. However, for the youngest cohorts, the rate of decrease is highest. This effect could only increase if the fraction of social contact is non-constant across age groups and higher for the younger ones. Second, if individuals overestimate network immunity (or  the epidemic size is downplayed), then they will choose higher levels of activity than if they had perfect knowledge of the state of the network epidemic. In doing so, the epidemic becomes large.

%We next investigate the utility gap between the social optimum and the voluntary social distancing. We find that the gap is one magnitude more significant when the death rates given infection depend on the global  network immunity.

\paragraph{Related literature.}
Our work is part of the vast literature on  SIR epidemics on random networks, to name just a few~\cite{Ball16, ball2012sir,  bhamidi2019survival, HofstadLee16, Janson14, Pastor-Satorras15, ball2009threshold, kiss2017mathematics, ball2010analysis,ball2014epidemics,barbour2013approximating,britton2007graphs,draief2010epidemics, volz2008sir, deijfen2011epidemics}.
We continue on the same line as \cite{Janson14} who study the SIR epidemics dynamics in the configuration model. 
Our model is quite different: \cite{Janson14} look at a dynamic contagion model but the model is homogeneous, whereas we look at a final outcome of the cascade in the heterogeneous case.
In particular, we allow for different individual types, social distancing strategy and heterogeneous infection rates. Our proof approach also is also different, based on a reduction to the independent threshold model. This  allows us to consider a more general class of epidemics represented by the independent threshold model with differentiated types. This may be of interest in itself. 

Following the health emergency, several papers study   the equilibrium social distancing for COVID-19, see e.g.,~\cite{Acemoglu20, acemoglu2020testing,  NBERw27059, NBERw26984, ferguson2006strategies, ferguson2020report, hota2020impacts, del2007mixing, prem2020effect, miller2010incidence, toxvaerd2020equilibrium}. Previous works on social distancing games include \cite{bhattacharyya2019game}.
{\blue Our work is also related to vaccination and isolation games in
\cite{hota2020impacts, omic2009protecting, trajanovski2015decentralized}.}

Their  assumption is that at each of a set of pre-specified discrete time points, an individual can choose to pay a cost associated with social distancing in exchange for a reduction in the risk of
infection. In contrast to our work, such a reduction is not endogenous, or rigorously determined in equilibrium.
Our paper is to our best knowledge the first to allow for the rigorous analysis of the social distancing game on a heterogeneous network underlying social contacts. In a companion paper \cite{aminca20}, we calibrate our model to the characteristics of the Covid-19 pandemic.

The second related strand of literature is on economics of information security for homogeneous networks,  see e.g.~\cite{Gordon2002, lelarge12a, Acemoglu16} and  on games on network \cite{jackson2010social, jackson2015games}.
In~\cite{Gordon2002}, the authors consider a simple one-period economic model for a single individual who takes into account a monetary loss should infection occur and a  probability depending on security investment to become infected.  The security investment choice is analogous to the social distancing. 
A sufficient condition for monotone investment is given in  \cite{lelarge12a} which guarantees that  when network vulnerability is higher individuals invest more. 
We make the equivalent assumption here that there is less social distancing when  network immunity is higher. 
In particular, \cite{Acemoglu16, lelarge12a}  analyze the network security game (strong versus weak protection) for a simple (SI) contact process in random networks. 
We generalize their results by considering a heterogeneous SIR epidemic process and allowing for different social distancing levels.

\paragraph{Outline.} The paper is structured as follows. In Section~\ref{sec:models} we provide the  modeling framework for heterogeneous SIR epidemics and state our main results regarding the final outcome of the epidemic on random networks underlying the social contacts. In Section~\ref{sec:socialEq} we consider the network social distancing game. 
%In Section~\ref{sec:num}, we illustrate how our model can be applied to the Covid-19 public health crisis and calibrate the parameters. Section~\ref{sec:conclusion} concludes and the 
Section~\ref{sec:proofs} and Section~\ref{sec:threshold} contain the proofs.

\section{Models and  final outcome of major outbreak}\label{sec:models}
In this section we introduce the epidemic model and state the main results, for a given individuals social distancing strategies profile across individuals, when contact takes place on random networks with given vertex degrees.
\subsection{Heterogeneous SIR epidemics}
We consider a heterogeneous stochastic SIR epidemic process  with a possibility of death, i.e., a SIRD (Susceptible $\to$ Infectious $\to$ Recovered/Dead) model, which is a Markovian model for spreading a disease or virus in a finite population. 

We will refer to  Recovered/Dead
as Removed, and that whilst the distinction between Recovered/Dead is not important for the results of
this section it will become key in Section~\ref{sec:socialEq}.

Our population is assumed to interact via a network $\cG$. The set of nodes $[n] := \{1,2, \dots, n\}$ represents individuals or households, and the edges represent (potential) connections between individuals.  Connections can stem from various sources, and the network is understood to aggregate all these sources.
{\blue The degree of node $i$ is denoted by $d_i \in \NN$, which represents the number of nodes adjacent to node $i$ (neighbours).}
Individuals susceptible to the epidemic may become infected through contact with other infected neighbors. 

The population is heterogeneous, individuals can be of different types (e.g., age, sex, blood type, etc.) in a certain countable (finite or infinite) type space $\cT$, large enough to classify all individuals to the available information. We use the notations ${\bf t} = (t_1, t_2, \dots, t_n)$ to denote the type profiles of all individuals.

Moreover, we consider a finite ensemble of social distancing strategies $\cS=\{0, 1, 2, \dots, K\}$, with $0$ representing complete isolation and a higher value of $s\in \cS$ representing higher connectivity. We use the notations ${\bf s} = (s_1, s_2, \dots, s_n)$ and  ${\bf s}_{-i} = (s_1, \dots, s_{i-1},s_{i+1}, \dots, s_n)$  to denote the social distancing  profiles of all individuals and all individuals other than $i$ respectively.

We assume that at time $0$, all individuals have only partial information about the network characteristics, the epidemic parameters and the initial conditions.  An individual of type $t$ will get utility $u_t^{(s)}$ by choosing social distancing strategy $s\in \cS$. The social distancing equilibrium will be discussed in Section~\ref{sec:socialEq}.

Let us denote by $n^{(s)}_t$ the number of individuals of type $t\in \cT$ with social distancing strategy $s\in \cS$  so that  $$\sum_{t\in \cT} \sum_{s\in \cS} n^{(s)}_t = n.$$ 
It is understood that the network is parametrized by its size (and indeed all quantities we define depend on $n$, which we leave out from the notation for simplicity). We seek to understand the outcomes of a major outbreak as the size of the network becomes large. The following  condition is standard: 
 for all $t\in \cT$ and $s\in \cS$,

\begin{equation}
\frac{n^{(s)}_t}{n} \longrightarrow \mu^{(s)}_t \ \ \text{as} \ \ n \to \infty.
\end{equation}
We  let $\mu_t:= \sum_{s\in \cS} \mu_t^{(s)}$ be the (asymptotic) fraction of individuals with type $t$.

\medskip

The initial condition of the epidemic is given by the set of initially infected individuals $\cI(0)$, the set of initially removed individuals $\cR(0)$  and the set of susceptible individuals $\cS(0)$. The set of initially removed individuals could be interpreted as a set of immune or non-susceptible individuals.
In the later stages of the epidemic, the set of removed nodes will grow with the recovered or dead individuals. 
Note that $ \cS(0)  \cup \cI(0)\cup \cR(0) = [n]$ and the initial conditions may also depend on the type of each individual. 

Each infected individual, throughout their infection period, infects (makes infectious contact with) any susceptible 
neighbor individual with type $t$ and social distancing strategy $s$ at the points of a Poisson process with rate $\beta_{t}^{(s)}>0$. 
We assume that there is no latent period so that the contacted susceptible individuals  are immediately infected and are able to infect other individuals.

To simplify the analysis, we assume that the infection period $\rho$ is constant and, without loss of generality, we scale the time to make the constant $\rho=1$. 
The infected individual with type $t$ dies after time $\rho$ with fatality probability $\kappa_t$ and  recovers with probability  $1-\kappa_t$. 
The fatality probability  is important for incorporating
costs into the model which are key for the game theory part.

Note that if  $\rho\neq 1$, it suffices to replace the infection rates $\beta$ by $\rho \beta$.

% where  $\kappa_t\in[0,1]$ is the infected fatality rate for an individual of type $t\in \cT$.
We will allow in Section~\ref{sec:socialEq} the fatality probability to depend  on the fraction (number) of infected people during the epidemic process, as the hospital system can be overwhelmed. We assume that the infection rate depends only on the type and strategy of the susceptible party. Implicit is a conservative setting in which the effort to avoid infection comes from the susceptibles. One could make additional assumptions on the infectives, on whom we could impose quarantine or we could model additional elements in their utility functions to entail concern for their family and friends. 
Here we leave these considerations aside, in order to focus on the individual's decision when the utility includes only her own value of life.

 We also assume that the recovered individuals  are no longer infectious, and moreover immune to further infections. Note that this remains a point of active research for  Covid-19.
 The epidemic process continues until there are no infective individuals present in the population. Each alive individual is then either still susceptible, or else they have been infected and have recovered.

We assume that there are initially $n_{S}, n_I$ and $n_R$ susceptible, infective and removed (recovered or dead) individuals, respectively.
Moreover, for each type $t \in \cT$, there are respectively $n_{S, t}, n_{I, t}$ and $n_{R, t}$ of these individuals with type $t$.
Hence, we have  $|\cS(0)|=n_S, |\cI(0)|=n_I, |\cR(0)|=n_R$, $$n_S=\sum_{t \in \cT}n_{S, t}, \ n_{I}=\sum_{t \in \cT} n_{I,t}, \ n_{R}=\sum_{t \in \cT} n_{R,t} \ \text{and} \ n_{S}+n_{I}+n_{R}=n.$$ 

We are then interested in $\cS^{(\bs)} (\infty)$ and $\cR^{(\bs)} (\infty)$ the final set of susceptible and removed individuals, respectively,  when the individuals follow the social distancing strategy $\bs$.
 Similarly, $\cS_{t,d}^{(s)}(\infty)$ denotes the final set of  susceptible individuals with type $t$, degree $d$ and social distancing strategy $s$.

\subsection{Random networks with fixed degrees}

For $n\in \NN$, let ${\bf d}^{(n)} = (d_i)_{i=1}^n$ be a sequence of
non-negative integers {\blue (representing the degree of each individual)} such that $\sum_{i=1}^n d_i$ is even. We now consider a configuration model for the underlying network.
We endow the set of  individuals $[n] := \{1,2, \dots, n\}$ with the sequence of degrees ${\bf d}^{(n)}$.
 We define a random
multigraph with given degree sequence $(d_i)_1^n$ as follows. To each node $i$, we associate $d_i$ labeled half-edges. All half-edges need to be paired to construct the graph, this is done by randomly matching them. When a half-edge of a node $i$ is paired
with a half-edge of a node $j$, we interpret this as an edge between $i$ and $j$.  We denote the resulting random graph by $\cG$
 and we write $(i,j) \in \cG$ for the event that there is an edge between $i$ and $j$. It is easy to see that conditional on the multigraph being simple graph, we obtain a uniformly distributed random graph wth these given degree equences; see e.g.~\cite[Proposition 7.15]{van2016random} and \cite[Theorem 3.1.12]{durrett07}.
 
We consider asymptotics as  $n\to \infty$ for the SIRD model on the configuration model. In the remainder of the paper we will use the notation $o_p$ and $\stackrel{p}{\longrightarrow}$  in a standard way. Let $\{ X_n \}_{n \in \mathbb{U}}$ be a sequence of real-valued random variables on a probability space
$ (\Omega, \mathbb{P})$.
If $c \in \mathbb{R}$ is a constant, we write $X_n \stackrel{p}{\longrightarrow} c$ to denote that $X_n$ converges in probability to $c$.
That is, for any $\epsilon >0$, we have $\mathbb{P} (|X_n - c|>\epsilon) \rightarrow 0$ as $n \rightarrow \infty$.
Let $\{ a_n \}_{n \in \mathbb{N}}$ be a sequence of real numbers that tends to infinity as $n \rightarrow \infty$.
We write $X_n = o_p (a_n)$, if $|X_n|/a_n$ converges to 0 in probability.
If $\mathcal{E}_n$ is a measurable subset of $\Omega$, for any $n \in \mathbb{N}$, we say that the sequence
$\{ \mathcal{E}_n \}_{n \in \mathbb{N}}$ occurs with high probability ({\bf w.h.p.}) if $\mathbb{P} (\mathcal{E}_n) = 1-o(1)$, as
$n\rightarrow \infty$.
 
 \medskip
 
 We assume that there are initially $n^{(s)}_{S, t,d}$ susceptible individuals with social distancing strategy $s\in \cS$,  type $t\in \cT$ and degree $d \in \NN$. Further, there are $n_{I,d}$ and $n_{R,d}$ infective and recovered individuals with degree $d \in \NN$, respectively. Hence, we have $$n_{S}=\sum_{s\in \cS}\sum_{t \in \cT}\sum_{d=0}^{\infty}n^{(s)}_{S,t,d}, \ n_{I}=\sum_{d=0}^{\infty}n_{I, d}, \ n_{R}=\sum_{d=0}^{\infty}n_{R,d},$$ and $n_{S}+n_{I}+n_{R}=n$.

For the initially infected or removed individuals we do not need to know their distribution across types, and  only their total  number of links (infected linkages)  matters for the epidemic dynamics.
 We only need to know that their initial fraction converges as the network becomes large.  Similarly, we need convergence of the fraction of infected linkages.
 The type, degree and social distancing strategy distribution only matters for the susceptible individuals.

We now describe the regularity assumptions on individual degrees under individuals type profile ${\bf t}^{(n)} = (t_1, t_2, \dots, t_n)$ and social distancing strategy profile ${\bf s}^{(n)} = (s_1, s_2, \dots, s_n)$  . 
%Note that for each $n \in \NN$, we have a degree sequence  ${\bf d}^{(n)}$ and security profile ${\bf s}^{(n)} = (t_1, t_2, \dots, t_n)$, but (to lighten notation) this dependency on $n$ will not be carried in the notation.
We assume that the sequence $({\bs, \bf t, \bf d})$ and the set of initially susceptible, infective and recovered individuals  satisfies the following regularity conditions:

\begin{itemize}
\item[$(C_1)$] The fractions of initially susceptible, infective and recovered vertices converge to some $\alpha_S, \alpha_I, \alpha_R \in [0,1]$, i.e., {\blue as $n\to \infty$,}
\begin{equation}
n_S/n \to \alpha_S, \ \ n_I/n \to \alpha_I, \ \ n_R/n \to \alpha_R.
\end{equation}
Moreover, $\alpha_S > 0$.

\item[($C_2$)] The degree, type and social distancing strategy of a randomly chosen susceptible individual converges to {\blue (as $n\to \infty$)}
\begin{equation}
n^{(s)}_{S,t,d}/n_{S} \to \mu^{(s)}_{t,d},
\end{equation}
for some probability distribution $\left(\mu_{t,d}^{(s)}\right)_{s\in \cS, t\in \cT, d \in \NN}$.  Moreover, this limiting distribution has a finite and positive mean
$$\mu_S:=\sum_{s\in \cS}\sum_{t \in \cT}\sum_{d=0}^{\infty}d\mu^{(s)}_{t,d} \in (0,\infty),$$
and the average degree of a randomly chosen susceptible individual converges to $\mu_S$, i.e., {\blue as $n\to \infty$,}
\begin{equation}
\sum_{s\in \cS}\sum_{t \in \cT}\sum_{d=0}^{\infty}d n^{(s)}_{S,t,d}/n_S \to \mu_S.
\end{equation}

\item[($C_3$)] The average degree over all individuals converges to some $\lambda \in (0,\infty)$, i.e. as $n\to \infty$ 
\begin{equation}
\frac{1}{n}\sum _{i=1}^{n} d_i \to \lambda,
\end{equation}
and, in more detail, for some $\lambda_S, \lambda_I, \lambda_R$, the average degrees over susceptible, infective and recovered individuals converge:
\begin{equation}
\sum_{s\in \cS}\sum_{t \in \cT}\sum_{d=0}^{\infty}d n^{(s)}_{S,t,d}/n \to \lambda_S, \ \ \sum_{d=0}^{\infty}d n_{I,d}/n \to \lambda_I, \ \ \sum_{d=0}^{\infty}d n_{R,d}/n \to \lambda_R.
\end{equation}

\item[($C_4$)]The maximum degree of all individuals is not too large: 
\begin{equation}
d_{\max}=\max\{d_i: i=1, \dots, n\} = o(n).
\end{equation}
\end{itemize}

Our first theorem concerns the case where $\lambda_I>0$ and the initially infective population is macroscopic:

\begin{theorem}
\label{thm-main}
Suppose that $(C_1)-(C_4)$ hold and $\lambda_I>0$.
Then there is a unique solution $x_*^{(\bs)} \in (0,1)$ to the fixed point equation $x=f^{(\bs)}(x)$, where
\begin{equation}
f^{(\bs)}(x) := \frac{\lambda_R}{\lambda} + \alpha_S \sum\limits_{s\in \cS} \sum\limits_{t\in \cT}\sum\limits_{d=0}^{\infty}  \frac{d\mu^{(s)}_{t,d}}{\lambda} \left(x+(1-x)e^{-\beta_t^{(s)}
}\right)^{d-1} .
\label{eq:fixedpt}
\end{equation} 
Moreover, the final fraction of susceptible nodes with degree $d \in \NN$, type $t\in \cT$ and social distancing strategy $s\in \cS$ satisfies (as $n \to \infty$):
\begin{equation}
\frac{|\cS_{t,d}^{(s)}(\infty)|}{n_{S,t,d}^{(s)}} \stackrel{p}{\longrightarrow} \left( x_*^{(\bs)}+\big(1-x_*^{(\bs)}\big)e^{-\beta
_t^{(s)}}\right)^d.
\label{eq:suscepfraction}
\end{equation}

\end{theorem}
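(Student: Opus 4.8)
The plan is to prove the theorem by reducing the SIR(D) epidemic to an independent bond‑percolation / threshold model on the configuration model $\cG$, and then analysing the resulting cascade through the exploration of the random pairing of half‑edges. The reduction rests on the two structural hypotheses built into the model: the infection period $\rho$ is constant (normalised to $1$), and the transmission rate $\beta_t^{(s)}$ along an edge depends only on the susceptible endpoint. Because of these, if an infected vertex $u$ is adjacent to a susceptible vertex $v$ of type $t$ and strategy $s$, then $u$ transmits to $v$ within its infectious window with probability $1-e^{-\beta_t^{(s)}}$, and these events are mutually independent across the directed edges, since they are governed by independent Poisson clocks. Consequently the final set of ever‑infected vertices coincides (in distribution) with the set reachable from the seeds $\cI(0)$ in the directed random graph in which each edge $u\to v$ is declared \emph{open} independently with probability $1-e^{-\beta_{t_v}^{(s_v)}}$. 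First I would make this coupling precise and recast it as an independent threshold model: to each susceptible vertex $v$ of type $t$, strategy $s$ and degree $d$ assign an independent threshold $\theta_v\sim\mathrm{Geom}\big(1-e^{-\beta_t^{(s)}}\big)$, and declare $v$ infected in the final configuration iff at least $\theta_v$ of its neighbours are infected. Since $\PP(\theta_v\le m)=1-e^{-m\beta_t^{(s)}}$ matches the probability that $v$ is infected given $m$ infected neighbours in the percolation picture, and the thresholds are independent across vertices (incoming directed edges to distinct vertices use distinct Poisson clocks), the monotone cascade of this threshold model reproduces the epidemic's final infected set.

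Second, I would apply the general limit theorem for the independent threshold model on configuration models (established in Section~\ref{sec:threshold}) to this geometric‑threshold family. The relevant order parameter is $x$, the asymptotic probability that, following a uniformly chosen half‑edge, the vertex at its far end does not get infected through its remaining half‑edges. Tracing one step of this local exploration yields the self‑consistent equation $x=f^{(\bs)}(x)$: a random half‑edge lands on a size‑biased vertex, which is initially removed (hence never transmitting) with limiting probability $\lambda_R/\lambda$, is initially infective with probability $\lambda_I/\lambda$ (contributing $0$), and is an initially susceptible vertex of type $t$, strategy $s$, degree $d$ with limiting probability $\alpha_S d\mu_{t,d}^{(s)}/\lambda$; in the last case it avoids infection precisely when each of its other $d-1$ neighbours fails to pass on the infection, an event of probability $\big(x+(1-x)e^{-\beta_t^{(s)}}\big)^{d-1}$ (the neighbour is itself uninfected, probability $x$, or is infected but fails to transmit across the edge, probability $(1-x)e^{-\beta_t^{(s)}}$). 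Summing the three cases reproduces exactly \eqref{eq:fixedpt}. The regularity conditions $(C_1)$--$(C_4)$ are precisely what the threshold theorem needs: $(C_1)$--$(C_3)$ furnish the limiting half‑edge proportions, and $(C_4)$, $d_{\max}=o(n)$, controls the error terms in the exploration.

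Third, I would settle existence and uniqueness of the fixed point deterministically. The map $x\mapsto x+(1-x)e^{-\beta_t^{(s)}}=e^{-\beta_t^{(s)}}+x\big(1-e^{-\beta_t^{(s)}}\big)$ is affine and increasing on $[0,1]$, so each summand $\big(x+(1-x)e^{-\beta_t^{(s)}}\big)^{d-1}$ is increasing and convex, whence $f^{(\bs)}$ is increasing and convex. At the endpoints, $f^{(\bs)}(1)=(\lambda_R+\lambda_S)/\lambda=1-\lambda_I/\lambda<1$ by $\lambda_I>0$, while $f^{(\bs)}(0)=\lambda_R/\lambda+\alpha_S\sum_{s,t,d}d\mu_{t,d}^{(s)}e^{-(d-1)\beta_t^{(s)}}/\lambda>0$ because $\mu_S\in(0,\infty)$ forces positive susceptible mass on degrees $\ge 1$. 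Hence $g:=f^{(\bs)}-\mathrm{id}$ is convex with $g(0)>0>g(1)$, so $\{g<0\}$ is a subinterval $(x_*^{(\bs)},1]$ and $x_*^{(\bs)}\in(0,1)$ is its unique root, namely the network immunity.

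Finally, the per‑stratum statement follows from one last one‑step argument: a susceptible vertex of type $t$, strategy $s$, degree $d$ survives iff all $d$ of its neighbours fail to infect it, each independently in the limit with probability $x_*^{(\bs)}+(1-x_*^{(\bs)})e^{-\beta_t^{(s)}}$, giving the claimed limit $\big(x_*^{(\bs)}+(1-x_*^{(\bs)})e^{-\beta_t^{(s)}}\big)^d$, and a law of large numbers over the $n_{S,t,d}^{(s)}$ vertices in the stratum upgrades this to convergence in probability. The main obstacle is not any of these deterministic computations but the probabilistic core of the second step: rigorously showing that the exploration of the configuration‑model pairing concentrates around the deterministic fixed point when degrees are merely sublinear ($d_{\max}=o(n)$ rather than bounded). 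This is where martingale/law‑of‑large‑numbers control of the cascade is needed, and it is exactly the content I would invoke from Section~\ref{sec:threshold}; verifying that the geometric‑threshold family meets that theorem's hypotheses is the crux of the argument here.
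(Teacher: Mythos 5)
Your proposal is correct and follows essentially the same route as the paper: couple the constant-infectious-period SIR dynamics with a type-dependent independent threshold model in which each susceptible vertex of type $t$, strategy $s$ receives an i.i.d.\ geometric threshold with success probability $1-e^{-\beta_t^{(s)}}$, invoke the general threshold theorem (Theorem~\ref{thm-independent}, proved in Section~\ref{sec:threshold} via Wormald-type fluid limits of the half-edge exploration), and finish with a deterministic analysis of the fixed-point equation \eqref{eq:fixedpt}. The one step where you genuinely depart from the paper is uniqueness, and your version is the tighter one: the paper deduces uniqueness of $x_*^{(\bs)}$ from $f^{(\bs)}(0)>0$, $f^{(\bs)}(1)<1$ together with strict monotonicity and continuity of $f^{(\bs)}$, which by itself does not preclude multiple crossings of the diagonal; your observation that each summand $\bigl(e^{-\beta_t^{(s)}}+x\bigl(1-e^{-\beta_t^{(s)}}\bigr)\bigr)^{d-1}$ is convex, so that $f^{(\bs)}-\mathrm{id}$ is convex and hence has exactly one zero given its endpoint signs, is the clean way to secure both uniqueness and the sign pattern (hence the stability of $x_*^{(\bs)}$) that Theorem~\ref{thm-independent} requires for the convergence statement. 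The only item you leave implicit is the algebraic bridge between the threshold theorem's fixed-point function and \eqref{eq:fixedpt}, namely the binomial identity $\sum_{\theta=1}^d p_{t,d}^{(s)}(\theta)\,\PP\bigl(\Bin(d-1,1-x)\le\theta-1\bigr)=\bigl(x+(1-x)e^{-\beta_t^{(s)}}\bigr)^{d-1}$, which the paper carries out explicitly and which is a routine computation.
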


We can interpret  $x_*^{(\bs)}$ as the probability that a neighbor of a randomly chosen susceptible individual does not get infected during the epidemic. 
{\blue 
It is then obtained as a fixed point of Equation~\eqref{eq:fixedpt}, where
$f^{(\bs)}(x)$ can be interpreted as the probability that a neighbor of a randomly chosen susceptible individual does not get infected during the epidemic, if the second order neighbors are independently not getting infected with probability $x$. 
In order for any susceptible individual to not be infected, it must be all its neighbours are either not infected (with probability $x$) or are infected but did not make contact during the infection (with probability $(1-x)e^{-\beta_t^{(s)}}$).
 If the status of the neighbors are independent, and there are $d-1$ such neighbours then $\left(x+(1-x)e^{-\beta_t^{(s)}}
\right)^{d-1}$ represents the probability not to be infected.

Now consider not any node, but the neighbour of a randomly chosen susceptible individual. Either that neighbor is recovered, with probability 
given by the first term of Equation~\eqref{eq:fixedpt} which is  $\frac{\lambda_R}{\lambda}$, or she is susceptible. 
 In the latter case she has degree $d$ and strategy $s$ with
 size biased probability $    \frac{d n^{(s)}_{S,t,d}  }{\lambda_S n} $  which converges to  $\alpha_S \frac{d\mu^{(s)}_{t,d}}{\lambda}$ as $n$ goes to infinity.
 Indeed, we use size biased probability and not $\mu^{(s)}_{t,d}$ because we are considering a neighbour of a node and the individuals with degree $d$ are $d$ times more likely to be such neighbors (the likelihood scales linearly with $d$). 
 We conclude that, in the limit, the probability that a neighbor of a randomly chosen susceptible individual does not get infected during the epidemic is a fixed point solution $x_*^{(\bs)}$ of the equation $x=f^{(\bs)}(x)$. 

The same intuition applies to \eqref{eq:suscepfraction}. To be consistent with the susceptible status of an individual of degree $d$, type $t$ and strategy $s$, it must be that all its $d$ neighbors are either susceptible (with probability $x$) or they were removed before interaction (with probability $(1-x_*^{(\bs)}) e^{-\beta
_t^{(s)}} $ ). Hence $\left( x_*^{(\bs)}+\big(1-x_*^{(\bs)}\big)e^{-\beta
_t^{(s)}}\right)^d$ represents the probability that a randomly chosen individual with type $t$ and degree $d$ not to be infected (in the limit).

}

%{\blue {\bf Remark.} We set $\theta=\frac{\rho+\beta x}{\rho + \beta}$ in \cite{Janson14}} \dots \medskip

\medskip

Our next theorem concerns the case with initially few infective individuals, i.e. $|\cI(0)|=o(n)$ and $\lambda_I=0$. 
This result partly generalizes the result in \cite{Janson14} to the multi-type case with different infection rates depending on social distancing (as discussed in Section \ref{sec:intro}, their setup is dynamic and homogeneous).

Let
\begin{equation}
\fR_0^{(\bs)}:= \left(\frac{\alpha_{S}}{\lambda}\right) \sum\limits_{s\in \cS} \sum\limits_{t\in \cT}\left(1-e^{-\beta
_t^{(s)}}\right)  \sum\limits_{d=0}^{\infty}  d(d-1) \mu^{(s)}_{t,d}.
\end{equation}
%
%\begin{equation}
%\fR_0^{(\bs)}:= \left(\frac{\alpha_{S}}{\lambda}\right) \left(1-e^{-\bar{\beta}}\right)  \sum\limits_{s\in \cS} \sum\limits_{t\in \cT} \sum\limits_{d=0}^{\infty}  d(d-1) \mu^{(s)}_{t,d},
%\end{equation}
%where 
%$$e^{-\bar{\beta}}:=  \frac{1}{\lambda}\sum\limits_{s\in \cS} \sum\limits_{t\in \cT} \sum\limits_{d=0}^{\infty}  d \mu^{(s)}_{t,d}e^{-\beta
%_t^{(s)}}.$$

This quantity represents the expected number of infective links in the second generation of the epidemic, i.e., the number of linkages of  those infected by the initial seed (other than the link from the initial seed). It is these linkages that could propagate the epidemic.
Susceptible individuals are reached by the initial seed according to the size biased distribution $ \alpha_S \frac{d\mu^{(s)}_{t,d}}{\lambda} $ that we have seen above, and they are infected with probability $1-e^{-\beta
_t^{(s)}}$. 
In fact, when the initial fraction of susceptible individuals is macroscopic,  $\fR_0^{(\bs)}$ characterizes not only the second generation of the epidemic, but every generation in the initial stages of the epidemic. Initially, the contagion process behaves  like a branching process, which could either die out or explode. The following theorem states that if $\fR_0^{(\bs)}$ is below $1$ then the epidemic will die out and otherwise a positive fraction of the population will become infected.

\begin{theorem}
\label{thm-res}
Suppose that $(C_1)-(C_4)$ hold and $\lambda_I=\alpha_I=0$. The followings hold:
\begin{itemize}
\item[(i)] If $\fR^{(\bs)}_0<1$, then the number of susceptible individuals that ever get infected is $o_p(n)$.

\item[(ii)] If $\fR^{(\bs)}_0>1$, , then  there exists $\delta>0$ such that at least $\delta n$ susceptible individuals get infected with probability bounded away from zero.

%There is a unique solution $x_*^{(\bs)} \in (0,1)$ to the fixed point equation $x=f^{(\bs)}(x)$. Moreover, staring from a randomly chosen infected individual, the probability that pandemic does not die out (reach a positive fraction of population) is  $$1-G_\mu\big(x_*^{(\bs)}\big)$$ where $G_\mu(x):=\sum\limits_{s\in \cS} \sum\limits_{t\in \cT}\sum\limits_{d=0}^{\infty} \mu_{t,d}^{(s)} x^d $ denotes the generating function for distribution $\mu$. 

%Moreover, the final fraction (probability) of susceptible nodes with degree $d \in \NN$, type $t\in \cT$ and social distancing $s\in \cS$ satisfies:
%\begin{equation}
%\frac{|\cS_{t,d}^{(s)}(\infty)|}{n_{S,t,d}^{(s)}} \stackrel{p}{\longrightarrow} \left(\frac{\rho + x_*^{(\bs)}\beta_t^{(s)}}{\rho+\beta_t^{(s)}}\right)^d.
%\end{equation} 
\end{itemize}

\end{theorem}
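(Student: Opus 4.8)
The plan is to reduce the SIRD dynamics to an \emph{independent} (bond percolation) model and then to analyse the spread through the configuration model by an exploration process that is approximated by a single-type Galton--Watson branching process with mean offspring exactly $\fR_0^{(\bs)}$. Because the infectious period is the constant $\rho=1$, an infected node contacts each neighbouring susceptible of type $t$ and strategy $s$ at least once during its infectious period independently with probability $1-e^{-\beta_t^{(s)}}$, and whether such a contact is ever made --- the only thing that matters for the \emph{final} outcome --- is independent across edges. Hence the set of individuals that ever get infected coincides with the set reachable from $\cI(0)$ in $\cG$ when each (directed) edge towards a susceptible of type $t$, strategy $s$ is declared open with probability $1-e^{-\beta_t^{(s)}}$, independently; this is the reduction carried out in Section~\ref{sec:threshold}. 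I would then explore this reachable set breadth-first, pairing half-edges one at a time (deferred decisions) and tracking the number of active infective half-edges. Following an active half-edge lands on a size-biased node, which is susceptible of type $t$, strategy $s$, degree $d$ with probability converging to $\alpha_S d\mu_{t,d}^{(s)}/\lambda$ by $(C_2)$--$(C_3)$; it is then infected with probability $1-e^{-\beta_t^{(s)}}$ and contributes $d-1$ new active half-edges. The expected number of new active half-edges produced is therefore $(1+o(1))\fR_0^{(\bs)}$, uniformly while only $o(n)$ half-edges have been explored, the error being controlled through $(C_2)$--$(C_4)$ and in particular $d_{\max}=o(n)$.

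For part (i), with $\fR_0^{(\bs)}<1$ I would fix $\epsilon>0$ small enough that $\fR_0^{(\bs)}+\epsilon<1$ and stochastically dominate the exploration \emph{from above}, up to the first time $\epsilon n$ half-edges are revealed, by a subcritical Galton--Watson process of mean $\fR_0^{(\bs)}+\epsilon$ (replacing sampling without replacement by sampling with replacement to obtain a clean domination). Its total progeny has finite expectation $1/(1-\fR_0^{(\bs)}-\epsilon)$ per initial particle; since $\alpha_I=\lambda_I=0$ forces $|\cI(0)|=o(n)$ and a total of $o(n)$ initially infective half-edges, the expected number of ever-infected nodes is $o(n)$, and Markov's inequality gives $o_p(n)$. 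The coupling is self-consistent: on the (high-probability) event that the dominating process has fewer than $\epsilon n$ offspring, fewer than $\epsilon n$ half-edges are explored and the domination is valid throughout. Letting $\epsilon\downarrow 0$ completes (i).

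For part (ii), with $\fR_0^{(\bs)}>1$ I would instead dominate the exploration \emph{from below}: choosing $\epsilon>0$ with $\fR_0^{(\bs)}-\epsilon>1$, the number of active half-edges stochastically dominates a supercritical branching process of mean $\fR_0^{(\bs)}-\epsilon$ as long as $o(n)$ half-edges are explored. Starting from at least one initial infective half-edge this process survives with probability $q>0$, and on survival it reaches size $a_n$ for any $a_n\to\infty$ with $a_n=o(n)$. Conditioned on reaching $a_n$ active half-edges, the probability that the process subsequently dies out before $\epsilon n$ half-edges are explored is at most $\varrho^{a_n}\to 0$, where $\varrho<1$ is the single-particle extinction probability, so with probability bounded below the exploration reaches a linear number of explored half-edges, and hence --- through the law of large numbers for the exploration Markov chain established in Section~\ref{sec:threshold}, which identifies the deterministic fraction reached via the non-trivial fixed point of $f^{(\bs)}$ --- infects at least $\delta n$ individuals for a suitable $\delta>0$. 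Equivalently, one may phrase this as the existence w.h.p.\ of a giant component of size $\geq\delta n$ in the percolated graph, together with the fact that the seed's component is this giant with probability at least $q-o(1)$.

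The main obstacle is precisely the passage from \emph{local survival of the branching approximation} to a \emph{macroscopic outbreak} in part (ii): the branching coupling is only valid while $o(n)$ half-edges are explored, so by itself it yields survival and divergence to $\omega(1)$ but not directly a linear number of infected nodes. Bridging this gap requires either a Wormald-type fluid-limit analysis of the exploration up to a linear fraction, or a giant-component argument for the inhomogeneous percolated configuration model, in both cases while controlling the depletion of susceptibles and the possibly growing maximum degree $d_{\max}=o(n)$. In particular, converting a linear number of explored half-edges into a linear number of infected \emph{nodes} --- which is what the statement asserts --- is where the fluid limit is genuinely needed, since crude half-edge counting only yields $\geq\epsilon n/d_{\max}$ nodes.
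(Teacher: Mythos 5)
Your reduction to independent percolation marks on edges is sound and is essentially the same reduction the paper performs (via geometric infection thresholds) in Section~\ref{sec:threshold}. However, part (i) of your argument has a genuine gap: the claim that the exploration's offspring mean is $(1+o(1))\fR_0^{(\bs)}$, ``controlled through $(C_2)$--$(C_4)$,'' does not follow from those conditions. The offspring mean of the exploration is an \emph{empirical size-biased} quantity, essentially $\frac{\alpha_S}{\lambda}\sum_{s,t,d} d(d-1)\bigl(1-e^{-\beta_t^{(s)}}\bigr)\, n^{(s)}_{S,t,d}/n_S$, i.e.\ a second factorial moment of the finite-$n$ degree data, whereas $\fR_0^{(\bs)}$ is defined through the limiting distribution $\mu$. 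Conditions $(C_2)$--$(C_3)$ give convergence in distribution plus convergence of \emph{first} moments only, and $(C_4)$ ($d_{\max}=o(n)$) is far too weak to upgrade this: adding $n^{1/4}$ susceptible vertices of degree $n^{1/2}$ leaves $(C_1)$--$(C_4)$, $\mu$ and $\fR_0^{(\bs)}$ unchanged, yet makes the empirical offspring mean diverge. Fatou only yields $\liminf$ of the empirical mean $\geq \fR_0^{(\bs)}$, which is the wrong direction for (i), where you need an \emph{upper} bound to dominate by a subcritical Galton--Watson process. The statement is still true, but your route needs an extra two-scale truncation: dominate the exploration restricted to degrees $\le\Delta$ (whose empirical offspring mean does converge, to the truncated reproduction number $\le\fR_0^{(\bs)}<1$) by a subcritical branching process, bound separately the high-degree susceptibles using that their total degree is at most $\epsilon_\Delta n$ with $\epsilon_\Delta\to0$, each of their half-edges seeding only an $O(1)$-mean subcritical cluster, and then let $\Delta\to\infty$. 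The paper sidesteps the issue entirely: it never couples the finite-$n$ process to a branching process, but instead uses the fluid limit already established for Theorem~\ref{thm-main} (proved by Wormald's method, where tails are handled by a first-moment truncation) plus a purely analytic observation on the limiting map: when $\lambda_I=0$ one has $f^{(\bs)}(1)=1$ and $\bigl(f^{(\bs)}(x)-x\bigr)'\le \fR_0^{(\bs)}-1<0$, so $f^{(\bs)}(x)>x$ on $[0,1)$ and the epidemic trajectory must stop at $x_*=1$, i.e.\ after $o_p(n)$ infections.

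For part (ii) you correctly flag the main obstacle (the branching coupling only controls the first $o(n)$ steps), and your ``alternative phrasing'' is in fact exactly how the paper closes this gap, more cheaply than a fluid-limit analysis of the supercritical exploration: keep only the susceptibles with infection threshold $1$ (equivalently, perform \emph{site} percolation retaining each susceptible with probability $1-e^{-\beta_t^{(s)}}$), observe that within this subgraph infection propagates by pure connectivity, and invoke Janson's site-percolation theorem \cite{janson08}, which yields a giant component precisely when $\fR_0^{(\bs)}>1$; a seed then ignites that component with probability bounded away from zero. Note also that the empirical-versus-limit issue above is harmless in (ii), since there one only needs a \emph{lower} bound on the offspring mean, which does hold after restricting to bounded degrees.
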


We end this section by the following remark. Our results in this paper are all stated for the random multigraph $\cG$. However, they could be transferred by conditioning on the multigraph being a simple graph (without loop and multiples edges). The resulting random graph, denoted by $\cG_*$, will be uniformly distributed among all graphs with the same degrees sequence. In order to transfer the results, we would need (see e.g.,~\cite{Janson14}) to assume that the degree distribution has a finite second moment, i.e. 
$$\sum_{s\in \cS}\sum_{t \in \cT}\sum_{d=0}^{\infty}d^2\mu^{(s)}_{t,d} \in (0,\infty),$$
which from~\cite{janson06} implies that the probability that $\cG$ is simple being bounded away from zero as $n\to \infty$. Moreover, we conjecture that all results hold even without the second moment assumption. Indeed \cite{bollobas2015old} have recently shown results for the size of the giant component in $\cG_*$ from the multigraph case without using the second moment assumption.
%; they prove that even with the small probability that the multigraph is simple, the error probabilities are even smaller. 

\subsection{Vaccination}
We now consider heterogeneous SIR epidemics with vaccination. 
We assume that
vaccination is done before the outbreak and renders individuals completely immune to the
disease, meaning vaccinated susceptibles can be treated as removed.
The model becomes a percolated random graph $\cG$ where we first generate the random graph $\cG$ (by the configuration model) and then vaccinate (remove) individuals at random.  Given a probability function $\omega:\cT \times \NN \to [0,1]$, let $\mathcal{G}^{(n)}_{\omega}$  denote the random graph obtained by randomly and independently deleting each individual of type $t\in \cT$ and degree $d\in \NN$ with probability $\omega_{t,d}$. In particular (as an example), for edge-wise vaccination, one vaccinates the end point of each susceptible half edge with some fixed probability $\nu$ independently of all the other half-edges.
Thus the probability that
a degree $d$ susceptible individual is vaccinated will be 
$\omega_{t,d}=1-(1-\nu)^d.$

Note that in the case where the social planner does not have information on the types and degrees, degree vaccination, where we vaccinate the nodes with highest degrees, is not possible. 
Edge-based vaccination is more beneficial compared to random vaccination (see e.g.~\cite{Ball16, Janson14}).

In general, the total number of vaccinated individuals  will satisfy
\begin{equation}
V/n_S \stackrel{p}{\longrightarrow} \sum_{s\in \cS}\sum_{t \in \cT}\sum_{d=0}^{\infty} \omega_{t,d} \mu_{t,d}^{(s)}.
\end{equation}

Since vaccinating an individual would be equivalent to changing its type from susceptible to recovered, our results apply to the number of infected individuals after vaccination:

\begin{theorem}
\label{thm-vacc}
Consider heterogeneous SIR epidemics in percolated (vaccinated) random graph $\cG$. Suppose that $(C_1)-(C_4)$ hold. Suppose further that each initially susceptible individual with type $t\in \cT$ and degree $k\in\NN$ is vaccinated  with probability $\omega_{t,d}\in[0,1)$ independently of the others, and let $\lambda_I=\alpha_I=0$. Let   
\begin{equation}
\fR_{\omega}^{(\bs)}:= \left(\frac{\alpha_{S}}{\lambda}\right) \sum\limits_{s\in \cS} \sum\limits_{t\in \cT}\left(1-e^{-\beta
_t^{(s)}}\right)  \sum\limits_{d=0}^{\infty}  d(d-1) \mu^{(s)}_{t,d}(1-\omega_{t,d}).
\end{equation}

The following hold:
\begin{itemize}
\item[(i)] If $\fR^{(\bs)}_\omega<1$, then the number of susceptible individuals that ever get infected is $o_p(n)$.

\item[(ii)] If $\fR^{(\bs)}_\omega>1$, then  there exists $\delta>0$ such that at least $\delta n$ susceptible individuals get infected with probability bounded away zero.

\end{itemize}

\end{theorem}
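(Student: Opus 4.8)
The plan is to reduce the statement to Theorem~\ref{thm-res} by regarding each vaccinated susceptible as an individual that was already removed before the outbreak begins. Since vaccination neither creates nor destroys half-edges, the percolated graph $\mathcal{G}^{(n)}_{\omega}$ is again a configuration model on the \emph{same} degree sequence $\mathbf{d}^{(n)}$; only the labels of a random subset of susceptibles are switched from susceptible to removed. Writing $N^{(s)}_{t,d}$ for the number of susceptibles of type $t$, degree $d$ and strategy $s$ that survive vaccination, we have $N^{(s)}_{t,d}\sim\Bin\big(n^{(s)}_{S,t,d},\,1-\omega_{t,d}\big)$, independently across classes, and the $n^{(s)}_{S,t,d}-N^{(s)}_{t,d}$ vaccinated ones join the removed pool.

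First I would verify that, in probability, the random configuration $\mathcal{G}^{(n)}_{\omega}$ satisfies the analogues of $(C_1)$--$(C_4)$, with explicitly computed limits. For each fixed $(s,t,d)$ with $\mu^{(s)}_{t,d}>0$ the class size grows linearly in $n$, so $N^{(s)}_{t,d}/n\top\alpha_S\mu^{(s)}_{t,d}(1-\omega_{t,d})$ by the law of large numbers; summing gives $\tilde n_S/n\top\tilde\alpha_S:=\alpha_S\sum_{s,t,d}\mu^{(s)}_{t,d}(1-\omega_{t,d})$, which is strictly positive since $\alpha_S>0$ and every $\omega_{t,d}<1$. The new removed fraction is $\tilde\alpha_R=\alpha_R+(\alpha_S-\tilde\alpha_S)$, while the total mean degree $\lambda$ and $d_{\max}=o(n)$ are unchanged, so $(C_1)$ and $(C_4)$ hold. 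The pointwise limit of the susceptible profile is $\tilde\mu^{(s)}_{t,d}=\alpha_S\mu^{(s)}_{t,d}(1-\omega_{t,d})/\tilde\alpha_S$, giving the first part of $(C_2)$.

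The one point requiring care is the convergence of the \emph{first-moment} degree sums over the surviving susceptibles and over the enlarged removed set, as in the second part of $(C_2)$ and in $(C_3)$. These are sums of independent terms, and I would control their fluctuations via $(C_4)$: since $\Var\big(\tfrac1n\sum_{s,t,d}d\,N^{(s)}_{t,d}\big)\le \tfrac1{n^2}\sum_{s,t,d}d^2 n^{(s)}_{S,t,d}\le \tfrac{d_{\max}}{n^2}\sum_{s,t,d}d\, n^{(s)}_{S,t,d}=o(1)$ by $(C_3)$--$(C_4)$, these sums concentrate around their means and converge in probability to the expected limits, yielding $\tilde\lambda_S=\lambda_S-\alpha_S\sum_{s,t,d}d\,\mu^{(s)}_{t,d}\omega_{t,d}$ and $\tilde\lambda_R=\lambda_R+(\lambda_S-\tilde\lambda_S)$ (the identification of the limiting means uses the uniform integrability implicit in $\lambda_S=\alpha_S\mu_S$). \textbf{This is the main, though essentially routine, obstacle:} the conditions $(C_1)$--$(C_4)$ are deterministic statements, so one must pass from the random vaccinated configuration to a high-probability event on which they hold with the stated limits. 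Note that no second-moment hypothesis is needed, as $(C_4)$ alone tames the relevant variances.

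Finally, on the high-probability event that $(C_1)$--$(C_4)$ hold for $\mathcal{G}^{(n)}_{\omega}$, and using that vaccination is independent of the half-edge matching and of the Poisson infection clocks, I would apply Theorem~\ref{thm-res} conditionally on the vaccination outcome. Its reproduction number for the vaccinated configuration is
\begin{align*}
&\Big(\tfrac{\tilde\alpha_S}{\lambda}\Big)\sum_{s\in\cS}\sum_{t\in\cT}\big(1-e^{-\beta^{(s)}_t}\big)\sum_{d=0}^{\infty}d(d-1)\tilde\mu^{(s)}_{t,d}\\
&\qquad=\Big(\tfrac{\alpha_S}{\lambda}\Big)\sum_{s\in\cS}\sum_{t\in\cT}\big(1-e^{-\beta^{(s)}_t}\big)\sum_{d=0}^{\infty}d(d-1)\mu^{(s)}_{t,d}(1-\omega_{t,d})=\fR^{(\bs)}_{\omega},
\end{align*}
since the factor $\tilde\alpha_S$ cancels against the normalisation in $\tilde\mu^{(s)}_{t,d}$; the identity holds with both sides possibly infinite. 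Parts (i) and (ii) then follow from the corresponding parts of Theorem~\ref{thm-res}: on the good event $G$ the epidemic size is $o_p(n)$ when $\fR^{(\bs)}_{\omega}<1$ and is at least $\delta n$ with probability bounded away from zero when $\fR^{(\bs)}_{\omega}>1$. Since $\PP(G)\to 1$, the $o_p(n)$ conclusion transfers unconditionally (bounding by $\PP(G^c)+o(1)$), and the ``bounded away from zero'' conclusion transfers because $\PP(\text{outbreak})\ge c\,\PP(G)\to c>0$.
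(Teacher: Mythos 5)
Your proof is correct and takes essentially the same route as the paper, whose own proof is a two-sentence sketch: treat the vaccinated nodes as additional removed nodes, observe that $(C_1)$--$(C_4)$ then hold with high probability for the new susceptible distribution (proportional to $\mu^{(s)}_{t,d}(1-\omega_{t,d})$), and invoke Theorem~\ref{thm-res}. Your variance bound via $(C_4)$, the uniform-integrability remark for the degree sums, the cancellation of $\tilde\alpha_S$ in the reproduction number, and the transfer of both conclusions from the good event are exactly the details the paper leaves implicit.
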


This theorem can be obtained as a corollary of theorem \ref{thm-res}. Indeed, it suffices to augment the set of removed nodes by the set of vaccinated nodes.
The assumptions $(C_1)-(C_4)$ hold with high probability with the new distribution for the susceptible nodes $\omega_{t,d} \bigl(1-\mu_{t,d}^{(s)}\bigr)$. {\blue 
As above,  $\fR^{(\bs)}_\omega$ characterizes  every generation in the initial stages of the epidemic, but now some individuals have been vaccinated. The epidemic will die out if this quantity is below $1$. Since  $\fR^{(\bs)}_\omega$ clearly exhibits the contribution of individuals as a function of types, strategy and degree, it offers a simple way to rank these contributions and design a vaccination policy of minimal cost that ensures that the epidemic dies out. 
 }

\section{Equilibrium social distancing}
\label{sec:socialEq}
In this section we introduce the social distancing game and analyse its equilibrium.
The game takes place over one period. At time zero agents make their decisions, and the period ends when the epidemic ends. For simplicity, we assume that all agents keep their decisions constant over the entire period.

\subsection{The model}

We now consider a network social distancing game in presence of an epidemic risk. We assume that individual $i$ can decide on social activity level $s \in \cS:=\{0,1, \dots, K\}$  for a payoff $\pi_i^{(s)}=\pi^{(s)}_{t_i,d_i}$, and faces a potential loss $\ell_i$ in case it becomes infected. Clearly, deciding in a higher social activity increases the payoff, i.e., $\pi_i^{(s)}$ is strictly increasing in $s$. However, 
\begin{equation}
0= \beta_t^{(0)}<\beta_t^{(1)} < \dots < \beta_t^{(K)}.
\end{equation}
Further, in our baseline model, the government itself might imposes a maximum level of activity $K_g \leq K$, so the activity space is $\cS_g \subseteq \cS$. For example, all effective strategies will be capped by a level $K_g$ and the strategy becomes $s \wedge K_g$. 

At time zero (the beginning of the period) individuals decide on their social activity level based on the following  private and public information. 
Their only private information is their own state and their potential loss in case they  become infected during the epidemic.
The distribution of the (type-dependent) potential losses, denoted by $F_t$, is public knowledge. 
Regarding the network, only statistical quantities are public knowledge. 
Namely, individuals do not observe who is connected to whom and not even their own neighbors, only the degree-type distribution is known.
The degree-type and epidemic parameters $\beta_{t}^{(s)}$ are common knowledge as well.
Individuals do not know the exact nodes that are initially infected, but only their (asymptotic) fraction.

We assume that individual $i$ with type $t_i \in \cT$ who become infected incur a loss (of say death or serious illness) with probability $\kappa_{t_i}$. This loss is denoted by $\ell_i$. 

We can now define the  utility of (susceptible) node $i$ (in the network of size $n$) as
\begin{equation}
 u_i ({\bf \ell}, {\bf s}) =  u_i(\ell_1, ..., \ell_n, s_1, ..., s_n)  :=  \pi_{t_i,d_i}^{(s_i)} - \ell_i \kappa_{t_i}\PP_n (i \in \iI^{(\bs)} (\infty)),
 \label{eq:utility}
\end{equation}
where $\cI^{(\bs)}(\infty) = \cR^{(\bs)}(\infty)\setminus\cR(0)$ denotes the final set of all cumulative infection starting from initial infected seed $\cI_0$ and $\PP_n (i \in \iI^{(\bs)}(\infty))$ is over the distribution of the random graph $\cG$ of size $n$, given all nodes' degrees, losses and social activity vector $\bs$. As in \cite{NBERw27059}, we capture  the risk of loosing life or becoming ill  together by a single function $\kappa_t(x)$, where $t$ is the type and $x$ is the network immunity parameter.

We say that a  social activity across susceptible individuals ${\bf s^*} = (s_1^*, s_2^*, \dots, s_n^*)$ is a (pure-strategy)  {\it Nash equilibrium} if
\begin{equation}
s_i^* \in \mathop{\arg \max}_{s \in \cS}  u_i(\ell_1, ..., \ell_n, s_1^*, ...s_{i-1}^*,s, s_{i+1}^*, ..., s_n^*),
\end{equation}
for all $i \in [n]$.

Similarly, a social activity profile ${\bf s^*} = (s_1^*, s_2^*, \dots, s_n^*)$ is a {\it social optimum} if for all ${\bf s} \in \cS^n$, 
$$\sum_{i=1}^n  u_i ({\bf \ell}, {\bf s^*}) \geq \sum_{i=1}^n  u_i ({\bf \ell}, {\bf s}).$$

 We call parameter $x_*$ the global  network immunity, because it represents the probability that a susceptible neighbor of a randomly chosen node does not get infected. Then, for a given random network with immunity $x$, a representative susceptible individual with type $t$, degree $d$ and social activity $s$ will get infected and face losses $\ell$ with (asymptotic) probability (see Theorem~\ref{thm-main})
$$\kappa_t(x)   \left(1-\left( x+(1-x)e^{-\beta
_t^{(s)}}\right)^d\right).$$   
An individual of type $t$ and degree $d$ will  get utility $u_{t,d}^{(s)}(x)$ by choosing social distancing strategy $s\in \cS$.  More precisely, her utility is decomposed into the utility from activity, denoted by $\pi_{t,d}^{(s)}$ and a cost of life loss or path to recovery modeled by a loss random variable $\ell$ (following distribution $F_t$) and fatality probability $\kappa_t$.  We also assume that $\kappa_t=\kappa_t(x_*)$ depends on network immunity $x_*$. This captures the fact that the chance of death/severe illness is impacted by the performance of the health system, which in turn depends on the network immunity and the size of infected population.  Note that the utility from social activity depends on the choice of the susceptible individual and also on the overall fraction of individuals choosing to interact.

Given the global network immunity $x\in[0,1]$, the (representative) individual with type $t$ and degree $d$ maximization problem is thus 

\begin{equation}
  s^*_{t,d}:=  \arg\max_{s \in \cS} \pi_{t,d}^{(s)}  - \ell \kappa_t(x) \left ( 1-  \left( x+(1-x)e^{-\beta
_t^{(s)}}\right)^d \right).
\end{equation}
{\blue We  characterize the strategy of the representative individual as a function of his potential loss (in case of infection). However, the individuals with the same type and degree are allowed to differ in loss. }
We can interpret the solution to the above maximization problem as an {\it asymptotic Nash equilibrium} when the global network immunity $x$ summarizes the impact of all individuals optimal social activity levels $s^*_{t,d}$. {\blue This can be related to mean field models, where the network immunity can be seen as the mean field, aggregating the strategy of all players.}  This  is a fixed point problem that will be described in the following.
Indeed, this is an asymptotic Nash equilibrium because under partial information the limit of $\PP_n (i \in \iI^{(\bs)} (\infty))$ in \eqref{eq:utility} depends on the strategies of all other players only through the global network immunity.
Therefore, the strategy of each individual will be given by the strategy of the representative individual of her degree and type.

In particular, given global network immunity $x$, the representative individual prefers the social activity level $s$ over higher social activity level $s+1$ if and only if   

\begin{equation}
\ell \kappa_t(x) \left ( \left( x+(1-x)e^{-\beta
_t^{(s+1)}}\right)^d- \left( x+(1-x)e^{-\beta
_t^{(s)}}\right)^d   \right) >  \pi_{t,d}^{(s)}- \pi_{t,d}^{(s+1)}.
\end{equation}

Let us define for $s=0, 1, \dots, K-1$, the threshold loss functions
\begin{equation}
\ell_{t,d}^{(s)}(x):= \frac{\pi_{t,d}^{(s+1)}- \pi_{t,d}^{(s)}}{\kappa_t(x)\left ( \left( x+(1-x)e^{-\beta
_t^{(s)}}\right)^d- \left( x+(1-x)e^{-\beta
_t^{(s+1)}}\right)^d   \right) }.
\end{equation}
Note that $\ell_{t,d}^{(s)}(x)>0$ for $x \in (0,1)$ since $\pi_{t,d}^{(s)}$ and $\beta_t^{(s)}$ are strictly increasing in $s$ and $x\in(0,1)$ makes the denominator positive. 

Hence, the representative individual prefers the social activity level $s$ over higher social activity level $s+1$ if and only if  $\ell > \ell_{t,d}^{(s)}(x)$. In other words, since the loss function $\ell$ follows distribution $F_t$, the fraction of susceptible individuals with type $t$ and degree $d$ that prefer social activity $s+1$ over $s$ is given by $F_t\left(\ell_{t,d}^{(s)}(x)\right)$.

\begin{itemize}
\item[$(A_1)$] We assume in the following that $\ell_{t,d}^{(s)}(x)$ is a decreasing function of $s$, i.e., for all $d\in \NN, t\in \cT, x\in[0,1]$,
\begin{align}
\ell_{t,d}^{(0)}(x) >  \ell_{t,d}^{(1)}(x)> \dots  > \ell_{t,d}^{(K-1)}(x).
\end{align}
\end{itemize}
Note that the above assumption is only needed if there are more than two social activity levels, i.e., $K > 2$. Under this condition, the optimal individual's social activity is of threshold-type: follow the social activity level $s$ if and only if $\ell \in (\ell_{t,d}^{(s)}(x),\ell_{t,d}^{(s-1)}(x)]$ (set $\ell_{t,d}^{(-1)}(x)=\infty$).

\begin{itemize}
\item[$(A_2)$] We assume in the following that $\ell_{t,d}^{(s)}(x)$ is an increasing continuous function of $x$, i.e.
$$\kappa_t(x)\left ( \left( x+(1-x)e^{-\beta
_t^{(s)}}\right)^d- \left( x+(1-x)e^{-\beta
_t^{(s+1)}}\right)^d   \right)$$
is continuous and (strictly) decreasing in $x$.
\end{itemize}
%Note that if $\kappa_t(x)=\kappa_t$, the above assumption is automatically verified. 

The first assumption states that the level of loss where individuals choose to socially isolate is higher than the level of loss where individuals choose activity level $1$, and so on for the higher activity levels. 
The second assumption states that the level of loss where individuals choose to socially isolate is higher when the network immunity is higher. Same holds for all levels of activity;  for a fixed social distancing strategy the loss is higher when the global network health is higher. In other words, the less the global network immunity, the more susceptible individuals follow social distancing.
%
%The individual problem is then
%\begin{equation}
%    \max_{s \in \cS(\gamma)} u_t^{(s)}  - \left ( 1- \left(\frac{\rho + x_*^{(\bs)}\beta_t^{(s)}}{\rho+\beta_t^{(s)}}\right)^d \right)  \kappa_t(x_*^{(\bs)}).
%\end{equation}

{\blue We end this section by the following remark. 
The above model is conceived as a one period model, where the social distancing decision is made at the beginning of the period. However, one period may represent a verity of horizons (e.g. one day, one week, one month etc.) The equilibrium (steady state) of the epidemic is understood as the end of the period. In this case, the result at $\infty$  represents the end of the period.
Note that the network specification is consistent, the number of links refer to the number of contacts over one period (e.g. one day, one week, etc.) 
The time of the epidemic is an 'interaction time' and the relation to calendar time is carefully calibrated.
It does make sense that for a one period model the decision is kept constant. This is also for tractability reasons.
However, it is a very interesting extension to allow for multi-period decision making. In this case, we would need to update the types, degrees and decisions after each period. The myopic optimal decision problem will remain tractable. 
}

\subsection{Asymptotic Nash equilibrium analysis}

We are now ready to describe the asymptotic Nash equilibrium as a fixed point problem. In the previous section we described the representative individuals' choice given the global network immunity.

Let $x_e$ denote the expected global immunity in the random network (expected ratio of healthy edges among all the edges). Hence, the representative individual with degree $d$ and type $t$ would  choose social activity level $s$ if and only if 
$$\ell^{(s)}_{t,d} (x_e)< \ell \leq \ell^{(s-1)}_{t,d} (x_e).$$

So the fraction of individuals with degree $d$, type $t$ and following social activity level $s=0, 1, \dots, K$ is $\bar{\gamma}^{(s)}_{t,d} = \gamma^{(s)}_{t,d} (x_e)$, where 
\begin{equation}
\bar{\gamma}^{(s)}_{t,d} = F_t\left(\ell^{(s-1)}_{t,d} (x_e)\right)-F_t\left(\ell^{(s)}_{t,d} (x_e)\right)
\end{equation}
and we set $F_t(\ell^{(-1)})=F_t(\infty)=1$. So we have
$$\mu^{(s)}_{t,d}(x_e) = \mu_{t,d} \bar{\gamma}^{(s)}_{t,d}.$$

On the other hand,  given the probability distributions $\bar{\gamma}:\cT \times \NN \to \mathcal{P}(\cS)$, following Theorem~\ref{thm-main}, a node $i$ with type $t$ and degree $d$ will choose social activity $s\in \cS$ as long as 
  $$\ell^{(s)}_{t,d}(x_*^{\bar{\gamma}})<\ell_i \leq  \ell^{(s-1)}_{t,d}(x_*^{\bar{\gamma}}),$$
where $x_*^{\bar{\gamma}}$ is the smallest fixed point in $(0,1)$ of $x=f^{\bar{\gamma}}(x)$, with
\begin{equation}
f^{\bar{\gamma}}(x) := \frac{\lambda_R}{\lambda} + \alpha_S \sum\limits_{s\in \cS} \sum\limits_{t\in \cT}\sum\limits_{d=0}^{\infty}  \frac{d}{\lambda}  \mu_{t,d} \bar{\gamma}^{(s)}_{t,d} \left(x+(1-x)e^{-\beta
_t^{(s)}}\right)^{d-1}.
\end{equation} 

Hence, the actual fraction of individuals with type $t$ and degree $d$ following social activity $s\in \cS$ is given by $\gamma^{(s)}_{t,d}=\gamma^{(s)}_{t,d}(x_*^{\bar{\gamma}})$ where
\begin{align}
\gamma^{(s)}_{t,d} = F_t\left(\ell^{(s-1)}_{t,d} (x_*^{\bar{\gamma}})\right)-F_t\left(\ell^{(s)}_{t,d} (x_*^{\bar{\gamma}})\right).
\end{align}

Following the above analysis, for any $z\in[0,1], t\in \cT, d \in \NN$ and $s\in \cS$, we define  
\begin{align}
\label{eq:fpnash1}
f^{\gamma(z)}(x):=& \frac{\lambda_R}{\lambda} + \alpha_S \sum\limits_{s\in \cS} \sum\limits_{t\in \cT}\sum\limits_{d=0}^{\infty}  \frac{d}{\lambda}  \mu_{t,d} \gamma^{(s)}_{t,d}(z)  \left(x+(1-x)e^{-\beta
_t^{(s)}}\right)^{d-1},
\end{align}
where we set 
\begin{align}
\label{eq:fpnash2}
 \gamma^{(s)}_{t,d}(z) = F_t\left(\ell^{(s-1)}_{t,d} (z)\right)-F_t\left(\ell^{(s)}_{t,d} (z)\right).
 \end{align}
 
In the following theorem, we show the existence and uniqueness of (asymptotic) Nash equilibrium for the social distancing game.
\begin{theorem}\label{thm-unique}
Consider a random graph $\cG_n$ satisfying $(C_1)-(C_4)$. Assume that $(A_1)-
(A_2)$ holds. There exists a unique asymptotic Nash equilibrium (in the above sense), which is given by the solution of the following equation:
\begin{align}\label{eq:meanfield}
z= \inf\limits_{x \in [0,1]} \{x: f^{\gamma(z)}(x)=x \}.
\end{align}
\end{theorem}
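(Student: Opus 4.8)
The plan is to recast the equilibrium condition as a one-dimensional fixed-point problem and exploit monotonicity together with the convex geometry of $f^{\gamma(z)}$. Define the response map $\Phi : [0,1] \to [0,1]$ by $\Phi(z) := \inf\{x \in [0,1] : f^{\gamma(z)}(x) = x\}$, the smallest fixed point of $x \mapsto f^{\gamma(z)}(x)$; by construction of the preceding discussion an asymptotic Nash equilibrium is exactly a solution of $z = \Phi(z)$, and the equilibrium strategies are then recovered as $\gamma(z)$. The first step is to show $\Phi$ is well defined: each term $(x + (1-x)e^{-\beta_t^{(s)}})^{d-1}$ is nondecreasing and convex in $x$ on $[0,1]$, so $f^{\gamma(z)}$ is a convex, nondecreasing self-map of $[0,1]$ with $f^{\gamma(z)}(0) \ge \lambda_R/\lambda \ge 0$ and $f^{\gamma(z)}(1) = (\lambda_R + \lambda_S)/\lambda \le 1$ (using $\sum_s \gamma^{(s)}_{t,d}(z) = 1$ and $\lambda_S = \alpha_S \mu_S$). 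Hence $f^{\gamma(z)}(x) - x$ is $\ge 0$ at $x=0$ and $\le 0$ at $x=1$, a fixed point exists, and the infimum defining $\Phi(z)$ is attained.

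The core is to prove $\Phi$ is nonincreasing. I would first record that, by $(A_1)$--$(A_2)$, the chosen-activity distribution $\gamma(z)$ is stochastically increasing in $z$: since $\PP(\text{activity} \ge s+1) = F_t(\ell^{(s)}_{t,d}(z))$ and each threshold $\ell^{(s)}_{t,d}(\cdot)$ is nondecreasing, these tail probabilities increase with $z$. Because $s \mapsto (x + (1-x)e^{-\beta_t^{(s)}})^{d-1}$ is nonincreasing (as $\beta_t^{(s)}$ increases in $s$), taking the expectation of a nonincreasing function against a stochastically increasing law shows $z \mapsto f^{\gamma(z)}(x)$ is nonincreasing for each fixed $x$. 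I then transfer this to the smallest fixed point: if $z_1 \le z_2$ then $f^{\gamma(z_2)} \le f^{\gamma(z_1)}$ pointwise, so at $x_1 := \Phi(z_1)$ we have $f^{\gamma(z_2)}(x_1) \le f^{\gamma(z_1)}(x_1) = x_1$ while $f^{\gamma(z_2)}(0) \ge 0$; by continuity $f^{\gamma(z_2)}$ has a fixed point in $[0,x_1]$, giving $\Phi(z_2) \le \Phi(z_1)$. Consequently $g(z) := \Phi(z) - z$ is strictly decreasing, which already yields uniqueness of any solution of $z = \Phi(z)$.

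For existence I would establish continuity of $\Phi$ and invoke the intermediate value theorem on $g$, using $g(0) = \Phi(0) \ge 0$ and $g(1) = \Phi(1) - 1 \le 0$. Continuity of $z \mapsto f^{\gamma(z)}$, uniformly in $x$ (the tails in $d$ being controlled by $\sum_d d\mu_{t,d} < \infty$ and each factor bounded by $1$), follows from continuity of the thresholds in $(A_2)$ together with continuity of each $F_t$; this is where an atomless-loss assumption enters. Continuity of the smallest-fixed-point operator then follows from the convex picture: writing $\phi_z = f^{\gamma(z)} - \mathrm{id}$, one checks $\phi_z'(1) = \fR_0^{\gamma(z)} - 1$, so when $\fR_0^{\gamma(z)} > 1$ the interior root is a transversal crossing, stable under uniform perturbation, whereas when $\fR_0^{\gamma(z)} \le 1$ one has $\Phi(z) = 1$ and a uniform bound $\phi_z \ge c_\delta > 0$ on $[0,1-\delta]$ forces any nearby smallest fixed point above $1-\delta$.

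The main obstacle I anticipate is precisely this continuity at the phase transition $\fR_0^{\gamma(z)} = 1$, where the interior fixed point merges with the boundary fixed point $x = 1$ and the crossing becomes tangential; the convexity of $f^{\gamma(z)}$ is the tool that rules out pathological jumps there. A secondary technical point is the regularity of $F_t$: the clean continuous-$\Phi$ argument needs atomless (or at least continuous) loss distributions, and I would either assume this or, failing that, replace the intermediate-value step by a Tarski-type selection, defining the equilibrium as $\sup\{z : \Phi(z) \ge z\}$ and verifying it solves $z = \Phi(z)$ using monotonicity alone.
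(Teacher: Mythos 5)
Your core argument coincides with the paper's. The paper defines $g(z):=\inf\{x\in[0,1]:\, f^{\gamma(z)}(x)=x\}$ (your $\Phi$), rewrites $f^{\gamma(z)}$ by summation by parts so that the $z$-dependence enters only through the tails $F_t\bigl(\ell^{(s-1)}_{t,d}(z)\bigr)$ multiplied by the negative increments $\bigl(x+(1-x)e^{-\beta_t^{(s)}}\bigr)^{d-1}-\bigl(x+(1-x)e^{-\beta_t^{(s-1)}}\bigr)^{d-1}$, and deduces from $(A_1)$--$(A_2)$ that $f^{\gamma(z)}(x)$ is strictly increasing in $x$ and strictly decreasing in $z$; your stochastic-dominance phrasing is exactly this computation in different words. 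The descent step is also identical: $f^{\gamma(z_2)}(g(z_1))<g(z_1)$ together with $f^{\gamma(z_2)}(0)>0$ and continuity in $x$ forces a fixed point strictly below $g(z_1)$, hence $g$ is strictly decreasing and $z=g(z)$ has at most one solution. The genuine difference is existence: the paper's written proof stops at monotonicity of $g$, i.e.\ it only establishes uniqueness, whereas you supply the missing ingredient --- continuity of $\Phi$ plus the intermediate value theorem applied to $\Phi(z)-z$. Your convexity analysis is the right tool for that: each term $\bigl(x+(1-x)e^{-\beta_t^{(s)}}\bigr)^{d-1}$ is convex in $x$, so $f^{\gamma(z)}(x)-x$ is convex, positive at $x=0$, and strictly negative at $x=1$ whenever $\lambda_I>0$ (note $f^{\gamma(z)}(1)=1-\lambda_I/\lambda$, so the paper's assertion $f^{\gamma(z)}(1)<1$ already presumes $\lambda_I>0$, the regime of Theorem~\ref{thm-main} on which the equilibrium notion rests); a convex function with these signs has a unique zero and is strictly negative beyond it, so the smallest fixed point is stable under uniform perturbations and $\Phi$ is continuous. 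The tangency at the phase transition that you flag can only occur in the boundary case $\lambda_I=0$, outside this regime.

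One caveat: your fallback ``Tarski-type selection'' is not sound. $\Phi$ is nonincreasing, and Tarski's theorem requires an order-preserving map; an antitone self-map of $[0,1]$ need not have any fixed point. Concretely, if $\Phi=1$ on $[0,1/2)$ and $\Phi=0$ on $[1/2,1]$, then $\sup\{z:\Phi(z)\ge z\}=1/2$ while $\Phi(1/2)=0$, so the proposed selection does not solve $z=\Phi(z)$. Consequently, if the loss distributions $F_t$ had atoms (so that $z\mapsto f^{\gamma(z)}$ could jump), existence could genuinely fail; the continuity of $F_t$ and of the thresholds $\ell^{(s)}_{t,d}(\cdot)$ built into $(A_2)$ is not an optional refinement of your argument but an essential hypothesis for the existence half of the theorem.
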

The proof of above theorem is provided in Section~\ref{sec:proof-unique}.

Assuming that $n$ is large but finite, and all the players apply the asymptotic Nash equilibrium actions, it would be interesting to study the approximation of a Nash equilibrium in the finite network. We leave this and some other related issues for a future work.

\subsection{The effect of isolation}

In this section, we  assume $\mathcal S = \{0, 1\}$ and consider the (extreme) case where a susceptible individual following social distancing $s=0$ isolates and cannot get infected at all, i.e., $\beta^{(0)}_{t} = 0$ and $\beta^{(1)}_{t}= \beta_t$  for all $t\in \cT$. Namely, 
$s_i = 1$ if node $i$ does not quarantine and $s_i = 0$ if node $i$ quarantines and isolate from the network. Hence, given the network (expected) global immunity $x$ and setting $\pi_{t,d} = \pi_{t,d}^{(1)}-\pi_{t,d}^{(0)}$, a susceptible individual $i$ with type $t$ and degree $d$ will quarantine (isolate) from the network if and only if   
\begin{equation}
\ell_i> \ell_{t,d}(x):=\frac{\pi_{t,d}}{\kappa_t(x)\left (1- \left( x+(1-x)e^{-\beta
_t}\right)^d   \right) }.
\end{equation}

In this case, all individuals following the quarantine can be removed from the network and the epidemic goes through all other individuals. This is also equivalent to the individual vaccination (site percolation) model.  Let $\gamma_{t,d}$ denote the fraction of susceptible individuals (in equilibrium) with type $t$ and degree $d$ following quarantine. 

Hence, in this case $(A_1)-(A_2)$ are automatically verified and the equilibrium fixed point equations \eqref{eq:fpnash1}-\eqref{eq:fpnash2} can be simplified to:
\begin{align}
\gamma_{t,d} = 1-F_t\Bigl(\frac{\pi_{t,d}}{\kappa_t(x
_*^\gamma)\left (1- \left( x_*^\gamma+(1-x_*^\gamma)e^{-\beta
_t}\right)^d   \right) }\Bigr),
\end{align}
where 
$x_*^{\gamma}$ is the unique fixed point in $[0,1]$ of equation
\begin{equation}
f^{\gamma}(x) := \frac{\lambda_R}{\lambda} + \alpha_S \sum_{t\in \cT}\sum\limits_{d=0}^{\infty}  \frac{d}{\lambda}  \mu_{t,d} \left[\gamma_{t,d}+(1-\gamma_{t,d})  \left(x+(1-x)e^{-\beta
_t}\right)^{d-1}\right] .
\end{equation} 

%In Section \ref{sec:num} we will investigate the solution to this equation and give the application to Covid-19 for various network parameters. 
Let us assume $\pi_{t,d}^{(0)}=0$ (``no joy in isolation").  In this case, the social utility averaged over the population  converges to
\begin{align}
\frac 1 n \sum_{i=1}^n  u_i ({\bf \ell}, {\bf s}) \stackrel{p}{\longrightarrow} \bar{u}_{\rm social}(\gamma):=\sum_{t\in \cT}\sum\limits_{d=0}^{\infty}  \mu_{t,d}\bar{u}_{t,d}(\gamma),  \nonumber
\end{align}
with
\begin{equation}
\bar{u}_{t,d}(\gamma):= \pi_{t,d}(1-\gamma_{t,d}) -\kappa_t(x_*^\gamma)\left (1- \left( x_*^\gamma+(1-x_*^\gamma)e^{-\beta_t}\right)^d   \right) \int_\gamma^1 F_t^{-1}(1-u)du
\label{eq:socialoptbytype}
\end{equation}
where, for individuals with type $t$ and degree $d$, $\pi_{t,d}(1-\gamma_{t,d})$ is the total payoff from social activity  and $\kappa_t(x_*^\gamma) \int_\gamma^1 F_t^{-1}(1-u)du$ is the average  loss faced  by the $(1-\gamma_{t,d})$-fraction of individuals  who are not following isolation and therefore are subject to epidemic risk.

\subsection{Social optimum for regular homogeneous networks}
In this section, we consider the previous isolation setup in the case of random regular graphs, where $d_i=d$ and $t_i=t$ for all nodes $i\in [n]$. Hence, $\mu_{t,d}=1, \lambda=d, \lambda_R=\alpha_R d$, and we simplify the notations to $\kappa_t (\cdot)=\kappa(\cdot), \beta_t=\beta, \pi_{t,d}=\pi$ and $F_t=F$. 
Let $L$ be a random variable with distribution $F$. 

We use the following inverse cdf notation  for the loss distribution  
$$ F^{-1}(1-\gamma) := -\inf \{ \ell \in \RR : F(\ell) > 1 - \gamma\},$$ which represents the minimum amount of loss in $100(1-\gamma)\%$ worst-case scenarios. The equilibrium fixed point equations are simplified to 
\begin{align}
F^{-1}(1-\gamma) = \frac{\pi}{\kappa(x_*^\gamma)\left (1- \left( x_*^\gamma+(1-x_*^\gamma)e^{-\beta}\right)^d   \right)},
\end{align}
where 
$x_*^{\gamma}$ is the smallest fixed point in $[0,1]$ of equation
\begin{align}
f^{\gamma}(x) :=& \alpha_R+ \alpha_S \left[\gamma+ (1-\gamma) \left(x+(1-x)e^{-\beta}\right)^{d-1}\right] . 
%=&(1-\gamma) \left(\frac{\rho + x\beta}{\rho+\beta}\right)^{d-1}\\
%=& \alpha_R+ \alpha_S \left(\frac{\rho + x\beta}{\rho+\beta}\right)^{d-1} F\left(\frac{\pi\left(\rho+\beta\right)^d}{\kappa(x) \left ( \left(\rho+\beta\right)^d - \left(\rho+x_*^\gamma\beta\right)^d   \right)}\right).
\end{align} 
Let us assume again $\pi^{(0)}=0$ and $\pi^{(1)}=\pi$. The social utility averaged over the population  converges to
\begin{align}
\frac 1 n \sum_{i=1}^n  u_i ({\bf \ell}, {\bf s}) \stackrel{p}{\longrightarrow} \bar{u}_{\rm s}(\gamma) :=& \pi(1-\gamma) -\kappa(x_*^\gamma)\left (1- \left( x_*^\gamma+(1-x_*^\gamma)e^{-\beta}\right)^d   \right) \int_\gamma^1 F^{-1}_{1-u}(L)du \nonumber
\end{align}
where $\pi(1-\gamma)$ is the total payoff from social activity and $\kappa(x_*^\gamma) \int_\gamma^1 F^{-1}_{1-u}(L)du$ is the average  loss faced  by the $(1-\gamma)$-fraction of individuals  who are not following isolation and therefore are subject to epidemic risk.

The following theorem compares the fraction of self-isolating individuals in the {\it voluntary social distancing} equilibrium and the optimum reached by a social planner. 
\begin{theorem}\label{thm-social-reg}
The social planner will choose a larger fraction $\gamma$ of individuals following isolation than the equilibrium for any fixed payoff $\pi$ and fatality rate function $\kappa$.
\end{theorem}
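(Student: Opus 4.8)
The plan is to recast both the voluntary equilibrium and the social optimum as critical points of a single scalar objective in the isolating fraction $\gamma$, and then to compare them through a monotone ``marginal private benefit'' function, so that the entire gap between the two reduces to the sign of an externality term.

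First I would reduce the social planner's problem to a one–dimensional maximization over $\gamma\in[0,1]$. Because the graph is regular and the population homogeneous, the immunity $x_*^\gamma$ depends on the isolation decisions only through the fraction $\gamma$ (the fixed point of $f^\gamma$ involves $\gamma$ alone). For a fixed $\gamma$ the aggregate activity payoff is $\pi\,n(1-\gamma)$ regardless of \emph{who} is active, while the aggregate expected loss is $h(x_*^\gamma)\sum_{i\ \mathrm{active}}\ell_i$ with the common multiplier $h(x):=\kappa(x)\bigl(1-(x+(1-x)e^{-\beta})^d\bigr)$. Hence for given $\gamma$ the planner minimizes aggregate loss by keeping active exactly the lowest-loss individuals, which is the threshold policy and recovers $\bar u_{\mathrm s}(\gamma)=\pi(1-\gamma)-h(x_*^\gamma)\int_\gamma^1 F^{-1}(1-u)\,du$ as the value to maximize. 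I would then record the monotonicity facts driving the comparison: from $f^\gamma(x)=\alpha_R+\alpha_S[\gamma+(1-\gamma)(x+(1-x)e^{-\beta})^{d-1}]$ one gets $\partial_\gamma f^\gamma(x)=\alpha_S[1-(x+(1-x)e^{-\beta})^{d-1}]\ge 0$ since $x+(1-x)e^{-\beta}\le 1$, so an order-preserving comparison of smallest fixed points yields that $x_*^\gamma$ is non-decreasing in $\gamma$; assumption $(A_2)$ (here with $\beta^{(0)}=0$, so its bracketed difference is exactly $1-(x+(1-x)e^{-\beta})^d$) makes $h$ strictly decreasing in $x$, hence $h(x_*^\gamma)$ is non-increasing in $\gamma$; and $F^{-1}(1-\gamma)$ is non-increasing in $\gamma$. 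Consequently $\Phi(\gamma):=h(x_*^\gamma)F^{-1}(1-\gamma)$, a product of non-negative non-increasing factors, is non-increasing.

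Next I would differentiate and compare at the equilibrium. Differentiating gives
\[
\bar u_{\mathrm s}'(\gamma)=-\pi+\Phi(\gamma)-\frac{d}{d\gamma}h(x_*^\gamma)\cdot\int_\gamma^1 F^{-1}(1-u)\,du .
\]
The equilibrium fraction satisfies the indifference condition $\Phi(\gamma_{\mathrm{eq}})=\pi$ (unique by Theorem~\ref{thm-unique}). Evaluating at $\gamma_{\mathrm{eq}}$ the private terms $-\pi+\Phi(\gamma_{\mathrm{eq}})$ cancel and leave the externality term $-\tfrac{d}{d\gamma}h(x_*^\gamma)\int_{\gamma_{\mathrm{eq}}}^1 F^{-1}(1-u)\,du\ge 0$, so $\bar u_{\mathrm s}'(\gamma_{\mathrm{eq}})\ge 0$: the planner can strictly improve welfare by isolating more. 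Equivalently, any optimum $\gamma_{\mathrm{opt}}$ obeys $\Phi(\gamma_{\mathrm{opt}})=\pi+\tfrac{d}{d\gamma}h(x_*^\gamma)\int_{\gamma_{\mathrm{opt}}}^1 F^{-1}(1-u)\,du\le\pi=\Phi(\gamma_{\mathrm{eq}})$, and since $\Phi$ is non-increasing this forces $\gamma_{\mathrm{opt}}\ge\gamma_{\mathrm{eq}}$ (strictly when the externality term is positive, e.g.\ $\gamma_{\mathrm{eq}}<1$ with $h(x_*^{\cdot})$ strictly decreasing). The whole result lives in the sign of that externality term: it is non-negative precisely because one additional isolator raises immunity and thereby lowers every remaining active individual's infection risk, an effect the private optimizer ignores.

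The main obstacle I anticipate is rigor around $x_*^\gamma$, which is defined only as the \emph{smallest} fixed point: I must obtain its monotonicity purely from the order-preserving structure of $f^\gamma$ rather than from differentiation, and I must justify enough regularity to write $\bar u_{\mathrm s}'$. To avoid relying on concavity of $\bar u_{\mathrm s}$ (which need not hold), I would route the final comparison entirely through monotonicity of $\Phi$ together with the first-order inequality $\Phi(\gamma_{\mathrm{opt}})\le\pi$, handling the boundary cases $\gamma_{\mathrm{opt}}\in\{0,1\}$ directly; and at any point where $x_*^\gamma$ fails to be differentiable I would replace derivatives by one-sided difference quotients whose signs are controlled by the same monotonicity.
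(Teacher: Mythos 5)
Your proposal is correct and follows essentially the same route as the paper: both arguments compare first-order conditions, cancel the private terms (the equilibrium indifference condition $h(x_*^{\gamma_{\mathrm{eq}}})F^{-1}(1-\gamma_{\mathrm{eq}})=\pi$ against $-\pi$), and sign the remaining externality term $-\frac{d}{d\gamma}h(x_*^\gamma)\int_\gamma^1 F^{-1}(1-u)\,du \ge 0$ using the monotonicity of $x_*^\gamma$ in $\gamma$ together with the decreasing fatality rate. Your write-up is in fact somewhat tighter than the paper's: the paper drops the infection-probability factor $1-\left(x+(1-x)e^{-\beta}\right)^d$ from the loss term when it differentiates $\bar{u}_{\rm social}$ (recovering it only through the equilibrium condition, which leaves a positive slack term), and it stops after showing $\bar{u}'_{\rm social}(\gamma_e)\ge 0$, whereas you keep that factor inside $h$ throughout, justify the monotonicity of the smallest fixed point $x_*^\gamma$ by an order-preserving argument rather than assertion, and convert the derivative sign into an actual comparison of maximizers via the monotone function $\Phi$.
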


{\blue 
We now illustrate the results on the simple case of a regular homogeneous network, in which all individuals have the same type and degree. They differ in their potential loss, which is assumed to follow a half normal distribution.
We seek insights into the self isolating policy when individuals over and under-estimate the network immunity. Our results also allow us to investigate how epidemic quantities such as the reproductive number $\fR_0$ differ in the voluntary social distancing equilibrium versus the social optimum. 
}

Figure~\ref{fig:Rega} varies the link payoff $\pi$ (this derives the gain from social participation and in the numerical calibration will be taken to equal the value of statistical life).  As the network immunity fixed point solution increases, the final fraction of infected individuals decreases. 
This is intuitive: all else fixed, as the link payoff becomes larger, less people choose to socially distance and this results into a large scale epidemic.
Note that the fraction of individuals who self-isolate in equilibrium is smaller than in the social optimum as long as the link payoff is sufficiently small.
When the link payoff is too high, then the equilibrium and socially optimal strategy will coincide, and agree on not socially distancing.

  \begin{figure}[htp]
\centering
\subfloat[]{\includegraphics[width=12cm,height=5.5cm]{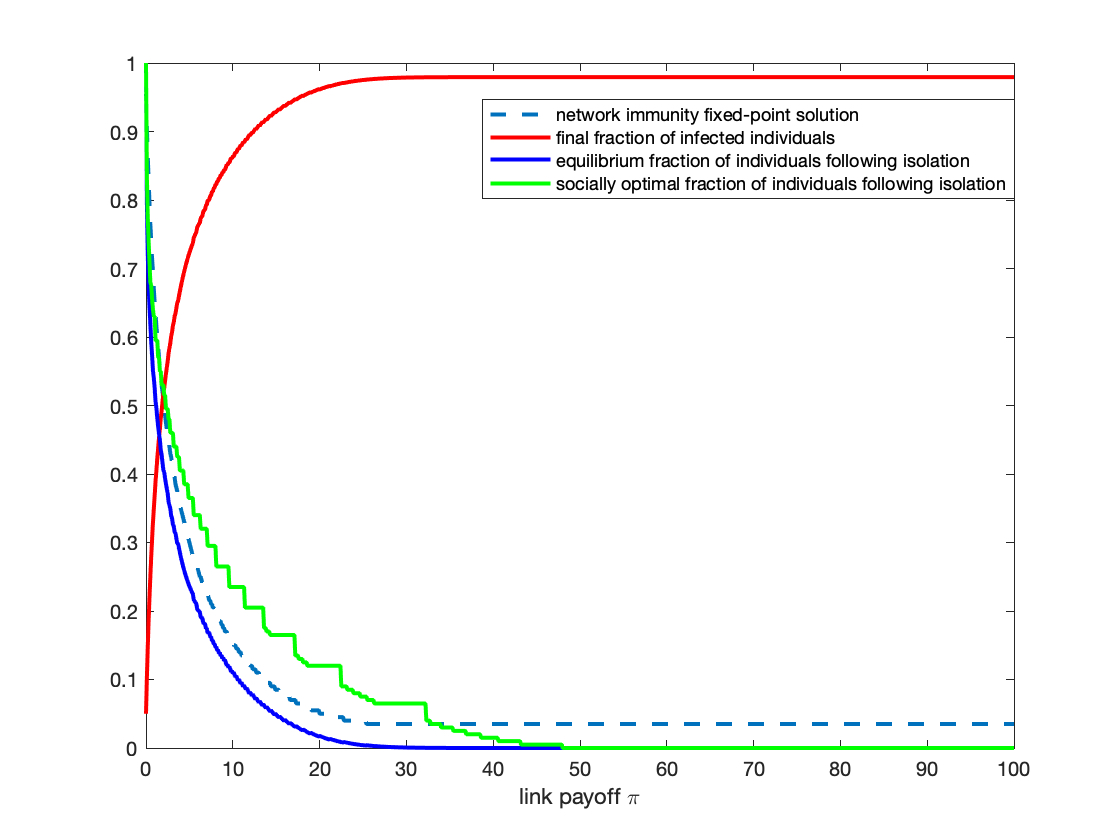}
\label{fig:Rega}}\\
\subfloat[]{\includegraphics[width=12cm,height=5.5cm]{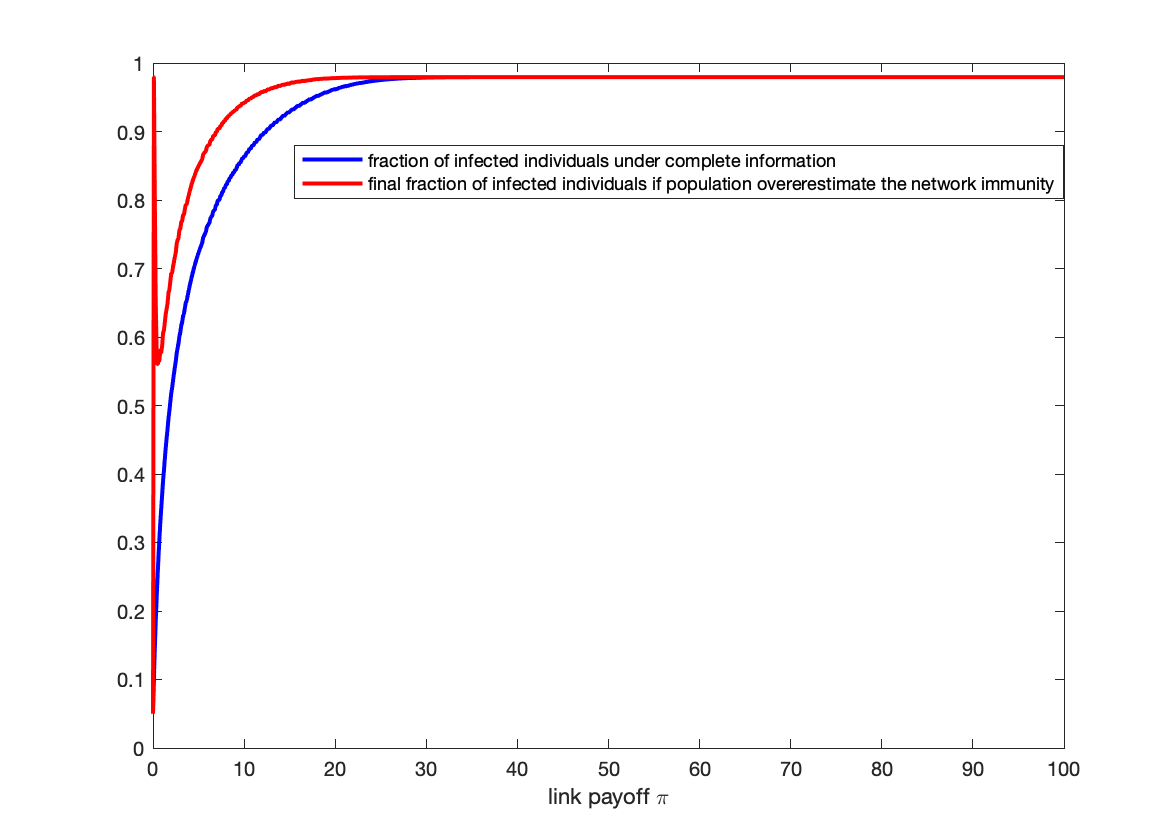}
\label{fig:Regb}}\\
\subfloat[]{\includegraphics[width=12cm,height=5.5cm]{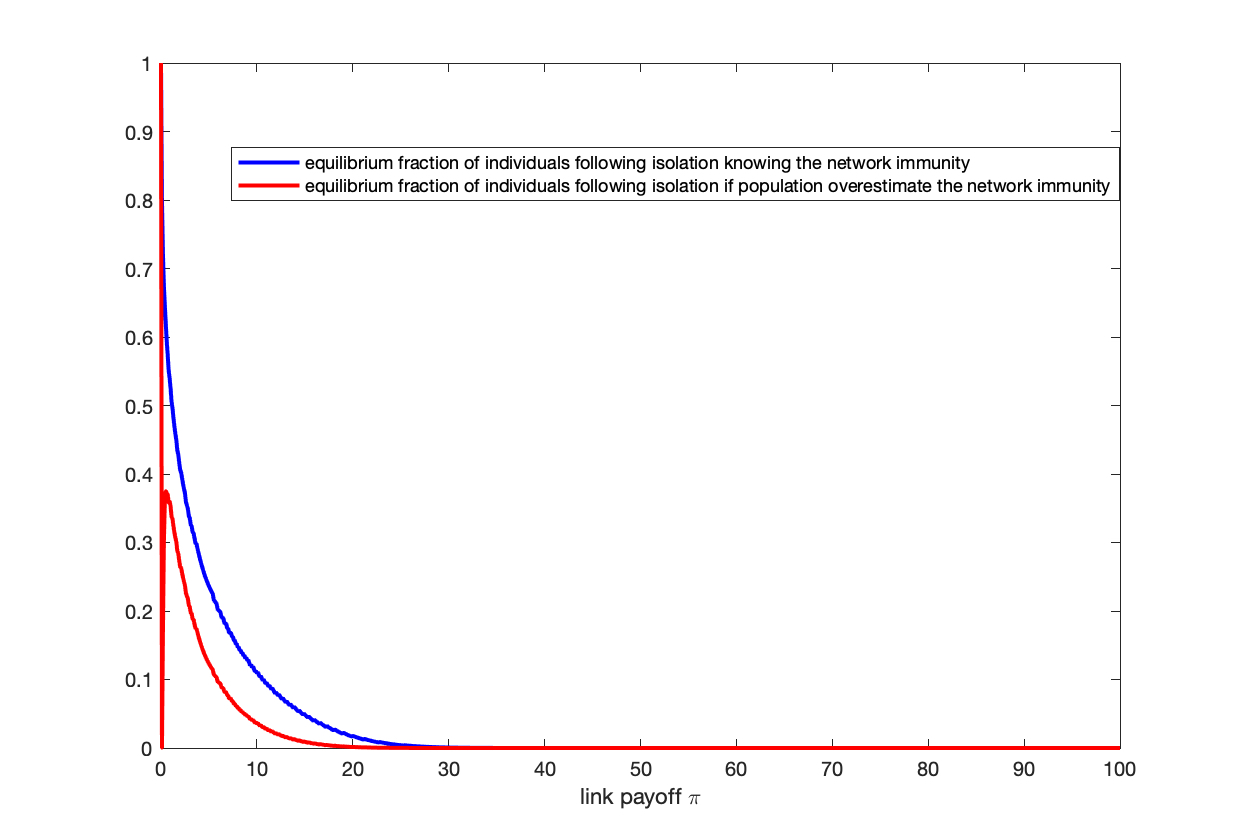}
\label{fig:Regc}}
\caption{\small Equilibrium solutions for regular homogeneous networks: Here $d=10, \alpha_R=0, \alpha_I=0.05, \alpha_S=0.95, \beta=0.4, \kappa(x)=0.1/(1+x)^3$ and $L$ follows half-normal distribution $L\sim HN(0,100)$ with density function $f(\ell;\sigma)=\frac{\sqrt{2}}{\sigma\sqrt{2\pi}}\exp\left(-\frac{\ell^2}{2\sigma^2}\right)$ for $\ell\geq 0$ and $\sigma=100$.}
%, i.e. $F(\ell)=1-e^{-\ell}$ and $F^{-1}_{1-\gamma}(L)=-\log(\gamma)$.

\label{fig:Reg}
\end{figure}
 
In Figures \ref{fig:Regb}-\ref{fig:Regc}
individuals may (1\%) overestimate the network immunity as they compute their optimal decision.
We plot the final fraction of infected individuals and individuals choosing to self isolate when they have perfect observation of the global network immunity versus when they overestimate the network immunity.
 Under over-estimation, a lower fraction self-isolate and the infection rates are higher.
The difference between the two cases can be seen as a value of information that allows individuals to optimally choose social distancing.
Even when payoffs from the linkages are low, an error on the estimated network risk can lead to a large scale epidemics. 
 Remark that  around $\pi = 0$, and when the immunity is equal to $1$, a small fraction of individuals can choose not to self isolate because their impact on overall immunity is small enough and with their over-estimation error, they  still believe that the network is fully immune.
Indeed, in the case when $\pi = 0$ and expected network immunity is one, because all individuals are indifferent, there are infinitely many equilibria.
This marks the beginning of the epidemic.

Figure \ref{fig:RegVac} shows that, as expected,  $\fR_0$ is larger in the voluntary social distancing equilibrium (Laissez-Faire equilibrium).  We note a strong dependence of the vaccination needs (in order to bring $\fR_0$ below $1$) on the link payoff.

\begin{figure}[htp]
\centering
\subfloat[]{\includegraphics[width=0.5\textwidth]{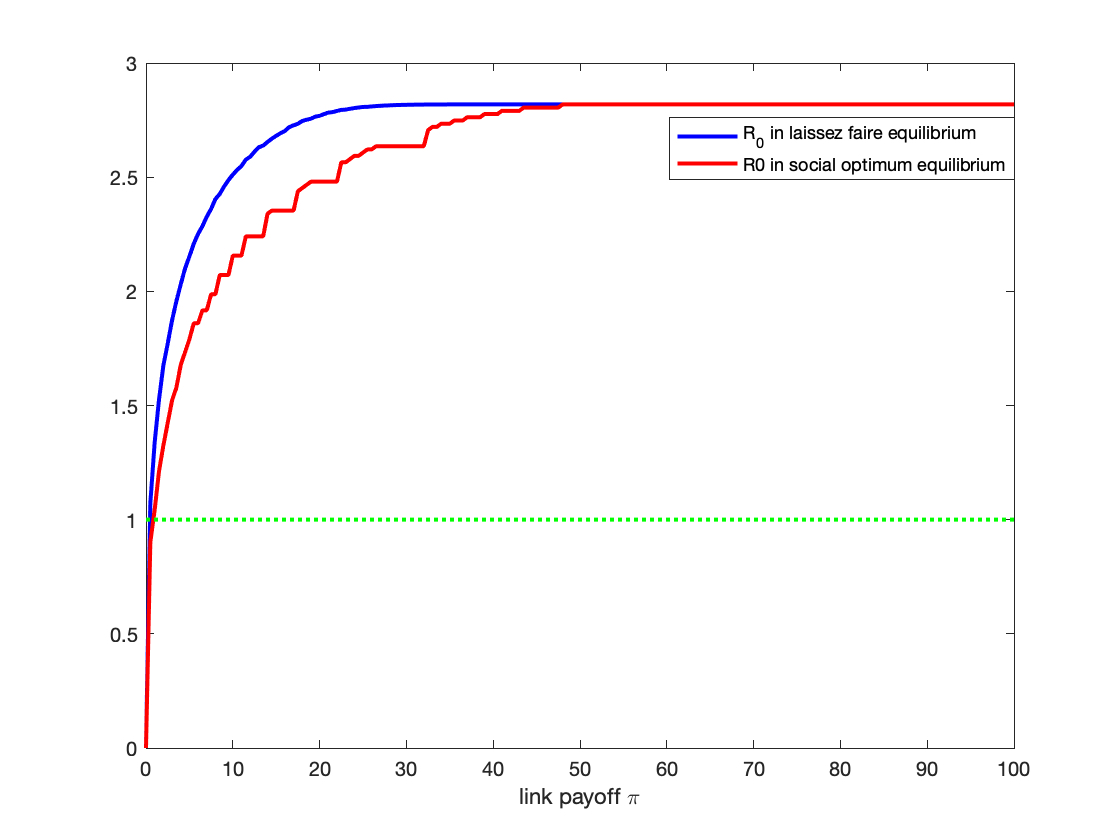}
\label{fig:RegVacA}}
\subfloat[]{\includegraphics[width=0.5\textwidth]{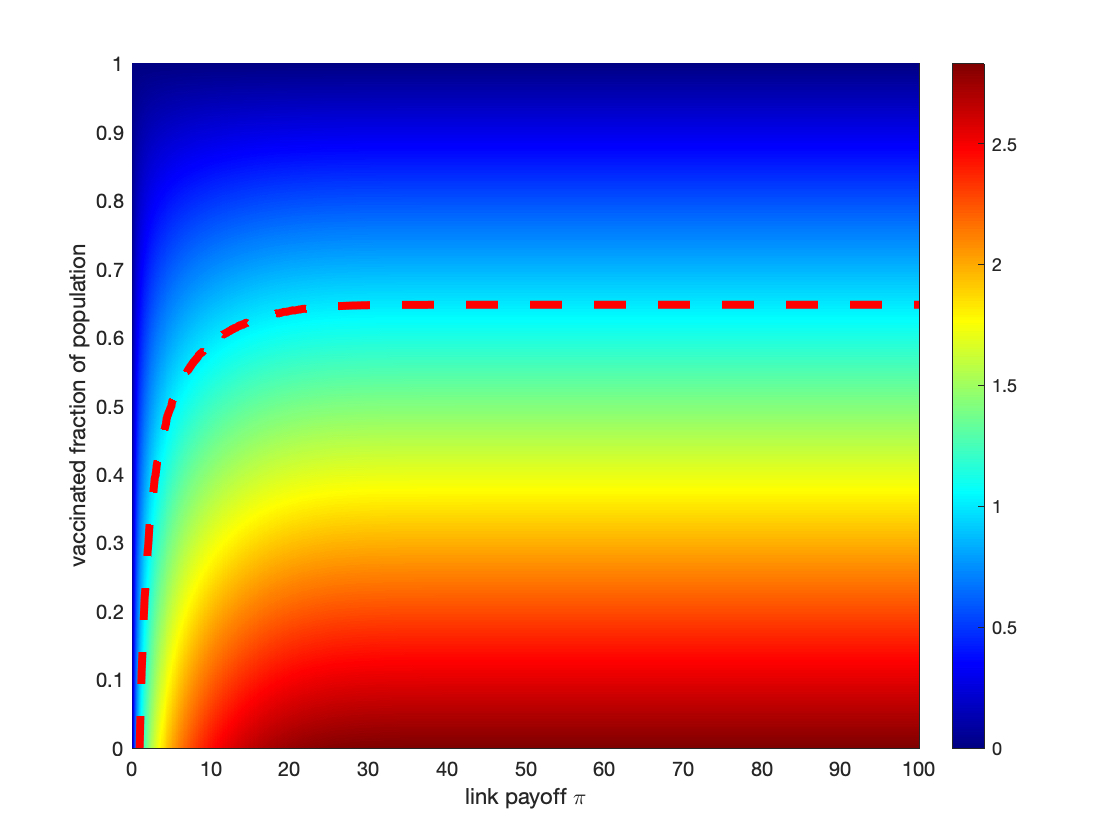}
\label{fig:RegVacB}}
\caption{\small (a) $\fR_0$ in voluntary social distancing equilibrium versus social optimum policy and (b) heat map for $\fR_0$ with vaccination in the case of regular homogeneous networks. Here $d=10, \alpha_R=0, \alpha_I=0.05, \alpha_S=0.95, \beta=0.4, \kappa(x)=0.1/(1+x)^3$ and $L$ follows half-normal distribution $L\sim HN(0,100)$.}
%, i.e. $F(\ell)=1-e^{-\ell}$ and $F^{-1}_{1-\gamma}(L)=-\log(\gamma)$.

\label{fig:RegVac}
\end{figure}

\medskip

We refer the reader to our companion paper \cite{aminca20}, where we calibrate our model to the Covid-19 epidemic characteristics and provide numerical results on the heterogeneous case.
We moreover study how different parameters affect the existence of an epidemic, its final size, and the utilities of the players.
{\blue 
The utility of the individuals integrates the value of statistical life:  social contact gives them a utility that is proportional to the value of statistical life yearly.  In case of infection there is a loss in utility, represented by the value of statistical life. There are important heterogeneity effects  both in the value of statistical life and in the  infection risk. When types represent age groups, the application of our model to the heterogeneous case can be used  to understand the social distancing strategies across age cohorts. We can then assess the  gap between the utility in the social distancing equilibrium and the social optimum across cohorts and derive implications of best type dependent policies.

}

%\newpage
\section{Proofs}
\label{sec:proofs}

\subsection{Proof of Theorem~\ref{thm-main}}
Consider the heterogeneous SIR epidemics spreading on $\cG$ satisfying $(C_1)-(C_4)$. In what follows, instead of taking a graph at random and then analyzing the epidemics, we use a standard coupling argument which allows us to study epidemics and the graph at the same time, revealing its edges dynamically while the epidemic spreads.
 
 Consider a vertex $i$ with type $t_i$ and $d_i$ (labelled) free (not yet paired) half-edges. We call a half-edge type $(s,t)$-susceptible, infective or removed according to the type of vertex it belongs to. A key step in the proof will be to decide from the beginning on the (random) infection threshold of each susceptible individual, denoted by $\Theta_i$ for individual $i$, defined as the (minimum) number of infected neighbor each individual can tolerate before it becomes infected. 
 
Since the (normalized) infection last $\rho=1$ days and the meeting happens at rate $\beta_t^{(s)}$ over all edges for a susceptible individual with type $t$, degree $d$ and following social activity $s$, it is easy to see that
\begin{equation}
\label{eq:threshold}
\PP(\Theta=\theta) = e^{-(\theta-1)\beta
_t^{(s)}}\left(1-e^{-\beta
_t^{(s)}}\right)=:p_{t,d}^{(s)}(\theta),
\end{equation}
 for $\theta=1,2, \dots, d$. Hence, $\mu_{t,d}^{(s)}p_{t,d}^{(s)} (\theta)$ will be the asymptotic fraction of susceptible individuals with type $d$, degree $d$, following social activity $s$ and getting infected after exactly $\theta$ infected neighbors.    We see that the model is equivalent to (type-dependent) independent threshold model for configuration model. 
 
The independent threshold model is defined by a given (type,degree,social distancing policy)-dependent threshold distribution $p_{t,d}^{(s)}(\theta)$. For each node with type $t\in \cT$, social distancing strategy $s\in \cS$ and degree $d\in \NN$, its threshold is drawn independently of everything else from this distribution. Starting from all initially infected nodes with threshold $0$, at each step, a susceptible node $i$ becomes infected if at least $\Theta_i$ neighbors are infected. 
 In the following, we first extend the results of \cite{amini, ACM11, lelarge11} on independent threshold model in configuration model, allowing for heterogeneous types and initial nodes removal. The theorem will then imply Theorem~\ref{thm-main}.
 
We denote by
$\Bin (k,p)$ a binomial  distribution  corresponding to the number of
successes of a sequence of $k$ independent Bernoulli trials each having probability of success  $p$.

\begin{theorem}\label{thm-independent}
Consider the type-dependent independent threshold model with threshold distribution $p_{t,d}^{(s)}(\theta)$ for all susceptible individuals with type $t$, degree $d$ and social distancing $s$, on random graph $\cG_n$ satisfying $(C_1)-(C_4)$. Let $x_*^{(\bs)}$ be the largest fixed point solution $x\in[0,1]$ to  $x=f^{(\bs)}(x)$ where 
\begin{equation}
f^{(\bs)}(x) := \frac{\lambda_R}{\lambda} + \alpha_S\sum\limits_{s\in \cS} \sum\limits_{t\in \cT} \sum\limits_{d=0}^{\infty}\sum\limits_{\theta=1}^d  \frac{d  \mu_{t,d}^{(s)}}{\lambda} p_{t,d}^{(s)}(\theta)  \PP\left(\Bin(d-1,1-x) \leq \theta-1 \right).
\end{equation}
We have for all $\epsilon>0$ w.h.p. 
\begin{align*}
\frac{|\cR^{(\bs)}(\infty)|}{n} \geq & \alpha_R+\alpha_S\sum\limits_{s\in \cS} \sum\limits_{t\in \cT} \sum\limits_{d=0}^{\infty}\sum\limits_{\theta=1}^d   \mu_{t,d}^{(s)} p_{t,d}^{(s)}(\theta)  \PP\left(\Bin(d,1-x_*^{(\bs)}) \geq \theta \right) -\epsilon.
\end{align*}
Moreover, if $x_*^{(\bs)}$ is a stable fixed point of $f^{(\bs)}(x)$, then 
\begin{align}
\frac{|\cR^{(\bs)}(\infty)|}{n} \stackrel{p}{\longrightarrow} \alpha_R+\alpha_S \sum\limits_{s\in \cS} \sum\limits_{t\in \cT} \sum\limits_{d=0}^{\infty}\sum\limits_{\theta=1}^d   \mu_{t,d}^{(s)} p_{t,d}^{(s)}(\theta)  \PP\left(\Bin(d,1-x_*^{(\bs)}) \geq \theta \right),
\end{align}
and,
the final fraction (probability) of susceptible nodes with degree $d \in \NN$, type $t\in \cT$ and social distancing strategy $s\in \cS$ satisfies:
\begin{equation}
\frac{|\cS_{t,d}^{(s)}(\infty)|}{n_{S,t,d}^{(s)}} \stackrel{p}{\longrightarrow}\sum\limits_{\theta=1}^d   p_{t,d}^{(s)}(\theta)  \PP\left(\Bin(d,1-x_*^{(\bs)}) \leq \theta-1 \right).
\end{equation} 
\end{theorem}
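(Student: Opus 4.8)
\emph{Reduction to an independent threshold model.} The statement to prove is the independent-threshold result of Theorem~\ref{thm-independent}; Theorem~\ref{thm-main} then follows by specializing the threshold law to $p_{t,d}^{(s)}$ as in \eqref{eq:threshold} and simplifying the resulting binomial sums (a surviving degree-$d$ susceptible has an asymptotically $\Bin(d,1-x)$ number of infected neighbours and survives iff this is below its geometric threshold $\Theta$, which gives $\PP(\Bin(d,1-x)<\Theta)=(x+(1-x)e^{-\beta_t^{(s)}})^d$). The justification for the reduction is the memorylessness already exploited in \eqref{eq:threshold}: over the unit infectious period each infected neighbour transmits independently with probability $1-e^{-\beta_t^{(s)}}$, so the final infected set is order-independent, and a susceptible vertex succumbs on its $\Theta$-th infected neighbour with $\Theta$ geometric. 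First I would draw all thresholds $\Theta_i$ up front (deferred decisions), independently across vertices, assigning threshold $0$ to the initially infected and to the initially removed. The key coupling device is to \emph{not} sample the whole configuration-model matching in advance, but to reveal pairings on demand as the infection spreads, which is legitimate since an unmatched half-edge is equally likely to pair with any other unmatched half-edge irrespective of the order of revelation.

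\emph{Exploration as a Markov chain.} I would then run the exploration that maintains the set of \emph{active} half-edges, i.e. unmatched half-edges incident to already-infected vertices. At each step take one active half-edge and pair it with a uniformly chosen half-edge among all currently unmatched ones; if the partner belongs to an already-infected vertex both half-edges die, while if it belongs to a still-susceptible vertex $v$ we increment $v$'s count of infected neighbours and, should this reach $\Theta_v$, declare $v$ infected and move all its remaining unmatched half-edges into the active set. The process terminates when no active half-edge remains, and $\cR^{(\bs)}(\infty)$ is read off as the initially removed vertices together with all vertices that became infected. The state is summarised by the number of active half-edges $A$ together with the profile recording, for each $(s,t,d)$, how many still-susceptible vertices carry each residual infection count; because every step changes these counts by $O(1)$ and the partner is uniform, the conditional one-step increments are explicit functions of the normalised state.

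\emph{Fluid limit and the fixed point.} Parametrising by the fraction $x$ of half-edges not yet consumed by the infection, I would show that the normalised profile concentrates around a deterministic trajectory. By exchangeability of the matching, the number of infected neighbours of a degree-$d$ susceptible when a fraction $1-x$ of half-edges has been infection-exposed is asymptotically $\Bin(d,1-x)$, so the expected number of active half-edges equals $n$ times an explicit function $H(x)$ that vanishes exactly at the fixed points of $x=f^{(\bs)}(x)$ (indeed $H(x)$ is proportional to $f^{(\bs)}(x)-x$). This is the step I expect to extend from \cite{amini, lelarge11, ACM11} to heterogeneous types and initial removals: I would prove uniform concentration of $A$ around $nH(x)$ via a martingale/Azuma argument whose increments are controlled by the sublinear maximum degree $(C_4)$ rather than by bounded degrees, with $(C_2)$–$(C_3)$ fixing the limiting empirical degree/threshold measure. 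Letting $x$ decrease from $1$, the active count stays $\Theta(n)$ while $H(x)>0$, so the exploration reaches, w.h.p., at least down to the largest zero of $H$, namely the largest fixed point $x_*^{(\bs)}$.

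\emph{Reading off final sizes; lower bound versus limit, and the main obstacle.} At termination each surviving susceptible of type $t$, degree $d$, strategy $s$ has an asymptotically $\Bin(d,1-x_*^{(\bs)})$ number of infected neighbours and survives iff this is $\le\Theta-1$; averaging against the threshold law gives the stated formula for $|\cS_{t,d}^{(s)}(\infty)|/n_{S,t,d}^{(s)}$, and complementing and summing against $\mu_{t,d}^{(s)}$ yields $|\cR^{(\bs)}(\infty)|/n$. Since a larger $x$ means fewer infections, reaching at least down to $x_*^{(\bs)}$ produces the claimed w.h.p. lower bound (up to $\epsilon$); the residual gap is that at an \emph{unstable} fixed point the fluctuations of $A$ could carry the exploration past $x_*^{(\bs)}$ to a smaller fixed point, infecting strictly more. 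When $x_*^{(\bs)}$ is a stable (transversal) zero of $H$, a first-passage/barrier argument shows $A$ hits $0$ within an $o(n)$ window of $x_*^{(\bs)}$ and cannot cross, upgrading the bound to convergence in probability. The main obstacle is precisely this concentration-plus-first-passage analysis: establishing the fluid approximation uniformly under only $(C_4)$, and cleanly separating the always-valid lower bound at the largest fixed point from the exact limit at a stable fixed point, since the tangential/unstable case genuinely fails to pin down the stopping point.
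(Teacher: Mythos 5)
Your overall route is the same as the paper's: reduce to the independent threshold model with geometric thresholds drawn up front, reveal the configuration-model pairing dynamically as the infection spreads, describe the exploration as a Markov chain on bucket counts, identify a fluid limit whose infective-half-edge coordinate vanishes exactly at the fixed points of $x=f^{(\bs)}(x)$, and stop the process at the largest zero, with the stable/unstable dichotomy explaining why in general one only gets a w.h.p.\ lower bound. Two concrete points in your plan, however, would fail as written.

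First, the concentration step. You propose to track the active half-edge count $A$ and control it by a martingale/Azuma argument with increments ``controlled by the sublinear maximum degree.'' But when a susceptible vertex of degree $d$ becomes infected, $A$ jumps by up to $d-\Theta_v$, so the increments are only bounded by $d_{\max}$, which under $(C_4)$ can be as large as, say, $n^{0.9}$; your parenthetical claim that every step changes the tracked counts by $O(1)$ is false precisely for $A$. Over the $\Theta(n)$ steps of the exploration, Azuma then yields a deviation bound of order $\sqrt{n}\,d_{\max}$, which is vacuous at scale $n$ unless $d_{\max}=o(\sqrt{n})$ --- a much stronger hypothesis than $(C_4)$. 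The paper circumvents this by never treating $X_I$ as a martingale: it tracks only the susceptible bucket counts $S^{(s)}_{t,d,\theta,\ell}$ and $X_R$, each of which changes by at most $1$ per step, truncates to degrees $d<\Delta_\epsilon$ (the tail being controlled uniformly via $(C_2)$--$(C_3)$, which give $\sum_{s,t}\sum_{d\geq\Delta_\epsilon} d\,\mu^{(s)}_{t,d}<\epsilon$), applies the Wormald--Warnke differential equation method to this finite, bounded-increment system (solving the ODEs explicitly to get $x_I(\tau)=\lambda x\left(x-f^{(\bs)}(x)\right)$ with $x=\sqrt{1-2\tau/\lambda}$), and only then recovers the infective half-edge count from the conservation identity $X_I(T)=X(0)-2T-X_R(T)-X_S(T)$, paying $O(\epsilon)$ for the truncated tail. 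The truncation is also what tames the infinite dimensionality (countably many types, unbounded degrees), which a finite-dimensional martingale setup cannot address directly; your instinct that the argument is at bottom a martingale concentration is right, but the choice of \emph{which} variables to concentrate is the whole point, and it is the missing device in your step three. With it, your first-passage endgame coincides with the paper's: past a stable zero the fluid limit $x_I$ is strictly negative, forcing the stopping time to concentrate.

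Second, a setup error: you assign threshold $0$ to the initially removed vertices. Threshold $0$ means ``infected at time zero,'' so your coupling would turn $\cR(0)$ into additional infectious seeds. Removed vertices must instead be inert absorbers: when an active half-edge pairs to one of their half-edges, both half-edges are deleted and nothing propagates (the paper's transition case 2, tracked by $X_R$). This is not cosmetic --- it is exactly what produces the blocking term $\lambda_R/\lambda$ in $f^{(\bs)}$; with your assignment the fixed point equation, and hence the limit, would come out wrong.
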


The proof of the above theorem is provided in Section~\ref{sec:threshold}. We now proceed with the proof of Theorem~\ref{thm-main} using the above theorem with 
$p_{t,d}^{(s)}(\theta)=e^{-(\theta-1)\beta
_t^{(s)}}\left(1-e^{-\beta
_t^{(s)}}\right)$,
 for $\theta=1,2, \dots, d$. In this case, using the binomial theorem, we have
 \begin{align*}
 \sum\limits_{\theta=1}^d  p_{t,d}^{(s)}(\theta)  \PP\left(\Bin(d-1,1-x) \leq \theta-1 \right) =&\sum\limits_{\theta=1}^d e^{-(\theta-1)\beta
_t^{(s)}}\left(1-e^{-\beta
_t^{(s)}}\right)  \PP\left(\Bin(d-1,1-x) \leq \theta-1 \right)\\
=&  \left(x+(1-x)e^{-\beta
_t^{(s)}}\right)^{d-1},
 \end{align*}
which implies that 
\begin{equation*}
f^{(\bs)}(x) = \frac{\lambda_R}{\lambda} + \alpha_S \sum\limits_{s\in \cS} \sum\limits_{t\in \cT}\sum\limits_{d=0}^{\infty}  \frac{d\mu^{(s)}_{t,d}}{\lambda} \left(x+(1-x)e^{-\beta
_t^{(s)}}\right)^{d-1} .
\end{equation*} 
as in Theorem~\ref{thm-main}. 
Moreover, the final fraction (probability) of susceptible nodes with degree $d \in \NN$, type $t\in \cT$ and social distancing strategy $s\in \cS$ satisfies:
\begin{align*}
\frac{|\cS_{t,d}^{(s)}(\infty)|}{n_{S,t,d}^{(s)}} \stackrel{p}{\longrightarrow}& \sum\limits_{\theta=1}^d   e^{-(\theta-1)\beta
_t^{(s)}}\left(1-e^{-\beta
_t^{(s)}}\right)  \PP\left(\Bin(d,1-x_*^{(\bs)}) \leq \theta-1 \right)\\
&= \left( x_*^{(\bs)}+\big(1-x_*^{(\bs)}\big)e^{-\beta
_t^{(s)}}\right)^d.
\end{align*} 

Hence, to prove Theorem~\ref{thm-main}, it only remains to prove that there is a unique solution $x_*^{(\bs)} \in (0,1)$ to the fixed point equation $x=f^{(\bs)}(x)$ which is a stable solution. Note that
$f^{(\bs)}(0)>0$ since $\alpha_S>0$ and $f^{(\bs)}(1)=\lambda_R/\lambda +\lambda_S/\lambda =1-\lambda_I<1$. Moreover, $f^{(\bs)}(x)$ is strictly increasing and continuous in $x$, which implies  there is a unique solution $x_*^{(\bs)} \in (0,1)$ to the fixed point equation $x=f^{(\bs)}(x)$. Moreover, this is a stable solution since $f^{(\bs)}(x)$ is strictly increasing.

\subsection{Proof of Theorem~\ref{thm-res}}
The proof of Theorem~\ref{thm-res} is based on Theorem~\ref{thm-main} and a theorem by Janson~\cite{janson08} on percolation in random graphs with given vertex degrees. Suppose that $(C_1)-(C_4)$ hold and $\lambda_I=\alpha_I=0$. We first show that if $\fR^{(\bs)}_0<1$, then the number of susceptible individuals that ever get infected is $o_p(n)$. We prove that in the subcritical case, if $\lambda_I=0$ then $x<f^{(\bs)}(x)$ for all $x\in[0,1)$. Indeed $f^{(\bs)}(1)=1$ (note that if $\lambda_I=0$, we have $\lambda=\lambda_R+\lambda_S$ which implies $f^{(\bs)}(1)=1$). Further,
\begin{align*}
\left(f^{(\bs)}(x)-x\right)'=& \alpha_S \sum\limits_{s\in \cS} \sum\limits_{t\in \cT}\sum\limits_{d=0}^{\infty} \left(1-e^{-\beta
_t^{(s)}}\right) \frac{d(d-1)\mu^{(s)}_{t,d}}{\lambda} \left(x+(1-x)e^{-\beta
_t^{(s)}}\right)^{d-2} -1 \\
\leq& \left(\frac{\alpha_{S}}{\lambda}\right) \sum\limits_{s\in \cS} \sum\limits_{t\in \cT}\left(1-e^{-\beta
_t^{(s)}}\right)  \sum\limits_{d=0}^{\infty}  d(d-1) \mu^{(s)}_{t,d} = \cR^{(\bs)}_0-1 <0.
\end{align*}
Since $x_I(\tau)=\lambda x \left(x-f^{(\bs)}(x)\right)$, we infer that 
$$\lim_{\alpha_I\to 0} x_*^{(\bs)} \to 1,$$
which implies that (by Theorem~\ref{thm-main}), the number of susceptible individuals that ever get infected is $o_p(n)$. 
 
We now consider the case $\fR_0^{(\bs)}>1$. Let us consider again the independent threshold model with threshold distribution
$$p_{t,d}^{(s)}(\theta):= e^{-(\theta-1)\beta
_t^{(s)}}\left(1-e^{-\beta
_t^{(s)}}\right).$$

Let only look at the structure of the subgraph obtained by removing all nodes with threshold higher than 1. Then each susceptible individual with type $t$, social activity $s$ and degree $d$ will remain in the percolated graph with probability 
$$p_{t,d}^{(s)}(1)= 1-e^{-\beta
_t^{(s)}}.$$

The result of Janson~\cite{janson08} on site percolation in configuration model implies that if 
$$\fR_0^{(\bs)}:= \left(\frac{\alpha_{S}}{\lambda}\right) \sum\limits_{s\in \cS} \sum\limits_{t\in \cT}p_{t,d}^{(s)}(1)  \sum\limits_{d=0}^{\infty}  d(d-1) \mu^{(s)}_{t,d} >1$$
then w.h.p. there is a giant connected component (where the fraction is bounded away from $0$) in the percolated random graph. Since all individuals present in the percolated random graph have threshold 1, the infection of any individual in the giant component will trigger the infection to whole component which implies Theorem~\ref{thm-res}.

\subsection{Proof of Theorem~\ref{thm-unique}}
\label{sec:proof-unique}
We define a function $g: [0,1] \to [0,1]$ via the following,
\begin{align*}
g(z) :=\inf\limits_{x \in [0,1]} \{x: f^{\gamma(z)}(x)=x \}.
\end{align*}
It can be easily seen that $f ^{\gamma(z)} (0) > 0, f^{\gamma(z)}(1)<1$. In conjunction with the continuity of $x \mapsto f ^{\gamma(z)}(x)$, we conclude that for any $z \in [0,1]$, the set $\{x: f^{\gamma(z)}(x)=x \}$ is nonempty and closed, and hence $g(z) \in (0,1)$ is well-defined. 

Now we show that $z \mapsto g(z)$ is decreasing in $z$, which implies that \eqref{eq:meanfield} has at most one solution. Suppose we have $0< z_1 < z_2 <1$. 

We have (set $F_t\left(\ell^{(-1)}_{t,d}\right) =1$ and )
\begin{align*}
f^{\gamma(z)}(x).=& \frac{\lambda_R}{\lambda} + \alpha_S \sum\limits_{s\in \cS} \sum\limits_{t\in \cT}\sum\limits_{d=0}^{\infty}  \frac{d}{\lambda}  \mu_{t,d} \gamma^{(s)}_{t,d}(z)\left(x+(1-x)e^{-\beta
_t^{(s)}}\right)^{d-1}\\
=&  \frac{\lambda_R}{\lambda} + \alpha_S  \sum\limits_{t\in \cT}\sum\limits_{d=0}^{\infty}  \frac{d}{\lambda}  \mu_{t,d} \sum_{s=0}^K \left(F_t\left(\ell^{(s-1)}_{t,d} (z)\right)-F_t\left(\ell^{(s)}_{t,d} (z)\right)\right)\left(x+(1-x)e^{-\beta
_t^{(s)}}\right)^{d-1} \\
=&  \frac{\lambda_R}{\lambda} + \alpha_S  \sum\limits_{t\in \cT}\sum\limits_{d=0}^{\infty}  \frac{d}{\lambda}  \mu_{t,d} \sum_{s=0}^K F_t\left(\ell^{(s-1)}_{t,d} (z)\right) \\
&\ \ \ \ \left(\left(x+(1-x)e^{-\beta
_t^{(s)}}\right)^{d-1} - \left(x+(1-x)e^{-\beta
_t^{(s-1)}}\right)^{d-1}\right).
\end{align*}

Hence, by using $(A_1)-(A_2)$, $f^{\gamma(z)}(x)$ is strictly increasing in $x$ and strictly decreasing function of $z$ (since $F$ is a strictly increasing cdf function). Therefore we have $f^{\gamma(z_1)}(x) > f^{\gamma(z_2)}(x)$ for any $x \in [0,1]$, and $$f^{\gamma(z_2)} ( g(z_1)) - g(z_1) < f^{\gamma(z_1)} (g(z_1)) -g(z_1) =0.$$ Combining with the fact that $f^{\gamma(z_2)}(0) \geq 0$ and the continuity of $x \mapsto f^{\gamma(z_2)}(x)$, there exists an $x < g(z_1)$ such that $ f^{\gamma(z_2)}(x)=x$, which implies that $g(z_2) < g(z_1)$. 

\subsection{Proof of Theorem~\ref{thm-social-reg}}

Recall that $\gamma_e$ (the equilibrium without social planner) is such that 
$$ \kappa(x)\left (1- \big( x_*^\gamma+(1-x_*^\gamma)e^{-\beta}\big)^d   \right) F^{-1}_{1-\gamma_e}(L)=\pi$$
while the social planner chooses $\gamma_s$ which maximizes $\bar{u}_{\rm social}(\gamma)$, i.e. 
$$\gamma_s = \mathop{\arg \max}_{\gamma \in [0,1]} \left\{\bar{u}_{\rm social}(\gamma):= \pi(1-\gamma) - \kappa(x_*^\gamma) \int_\gamma^1 F^{-1}_{1-u}(L)du\right\}.$$
Since $x_*^\gamma$ is increasing in $\gamma$ and $\kappa(.)$ is a decreasing function,  $\kappa(x_*^\gamma)$ is a decreasing function of $\gamma$ and we have
\begin{align*}
\bar{u}'_{\rm social}(\gamma_e) =& -\frac{d \kappa(x_*^\gamma)}{d\gamma} \int_\gamma^1 F^{-1}_{1-u}(L)du + \kappa(x_*^{\gamma_e}) F^{-1}_{1-\gamma_e}(L) - \pi \\
\geq& \kappa(x_*^{\gamma_e}) F^{-1}_{1-\gamma_e}(L) - \pi = \pi\left(\frac{1}{1- \big( x_*^\gamma+(1-x_*^\gamma)e^{-\beta}\big)^d  }-1\right) \geq 0,
\end{align*}
and the theorem follows.

\section{General independent threshold epidemics on $\cG$}
\label{sec:threshold}
In this section we present the proof of Theorem~\ref{thm-independent}.
\subsection{Markov chain transitions}\label{sec-markov}
We first describe the dynamics of the (independent threshold) epidemic on $\cG$ as a Markov chain, which is perfectly tailored for asymptotic study. At time $0$ the threshold of each susceptible individual is distributed randomly, according to (type dependent) distribution~\ref{eq:threshold}. 

For $\theta \in \NN$, let $n^{(s)}_{t,d,\theta}$ denotes the number of susceptible individuals with type $t\in \cT$, degree $d$ and social activity $s\in\cS$ which are given threshold $\theta=1, 2, \dots, d$. Hence, $$n^{(s)}_{t,d,\theta}/n_S \stackrel{p}{\longrightarrow} \mu_{t,d}^{(s)}p_{t,d}^{(s)} (\theta)$$ as $n\to \infty$. 
At a given time step $T$, individuals are partitioned into infected $\cI(T)$, susceptible $\cS(T)$ and removed $\cR(T)$. We further partition the class of susceptible nodes according to their type, social activity and threshold $$\cS(T) = \bigcup_{t, d, s, \theta}\cS_{t,d,\theta}^{(s)}(T).$$ 

At time zero, $\cI(0)$ and $\cR(0)$ contains respectively the initial set of infected and  recovered individuals. Hence, by ($C_1$), we know $|\cI(0)|/n\to \alpha_I$ and $|\cR(0)|/n\to \alpha_R$ as $n\to \infty$. 
 
 At each step we have one interaction only between two individuals, yielding at least one infected.
Our processes at each step is as follows :
 \begin{itemize}
 \item Choose a half-edge of an infected individual $i$;
 \item Identify its partner $j$ (i.e. by construction of the random graph in the configuration model, the partner is given by choosing a half-edge randomly among all available  half-edges);
 \item Delete both half-edges. If $j$ is currently uninfected with threshold $\theta$ and it is the $\theta$-th deleted half-edge from $j$, then $j$ becomes infected.
 \end{itemize}
 
Let us define $S^{(s)}_{t,d,\theta, \ell}(T)$, $0 \leq \ell < \theta$, the number of susceptible individuals with type $t$, degree $d$, social activity $s$, threshold $\theta$ and $\ell$ removed half-edges (infected neighbors)  at time $T$. We introduce the additional variables of interest:
\begin{itemize}
 \item $X_S(T)$: the number of (alive) susceptible half-edges belonging to susceptible individuals at time $T$;
 \item $X_I(T)$: the number of (alive) half-edges belonging to infected individuals at time $T$;
 \item $X_R(T)$: the number of (alive) half-edges belonging to initially recovered individuals at time $T$;
 \item $X(T)=X_S(T)+X_I(T)+X_R(T)$: the total number of (alive) half-edges at time $T$.
\end{itemize}
Hence, by Condition $(C_3)$, we have (as $n\to \infty$)
$$X_I(0)/n \longrightarrow \lambda_I,\ \  X_I(0)/n \longrightarrow \lambda_I, \ \ X_R(0)/n \longrightarrow \lambda_R \ \ \text{and} \ \ X(0)/n \longrightarrow \lambda.$$

Hence, $X(0)=\sum_{i=1}^n d_i$ denote the total number of half-edges in the network and,
since at each step we delete two half-edges, the number of existing (alive) half-edges at time $T$ will be 
\begin{equation}
X(T)=X(0)-2T.
\end{equation}
It is easy to see that the following identities hold:
\begin{align}
X_S(T)=& \sum\limits_{s\in \cS} \sum\limits_{t\in \cT} \sum\limits_{d=0}^{\infty}\sum_{\theta=1}^d\sum_{\ell=0}^{\theta-1} (d-\ell) S^{(s)}_{t,d,\theta, \ell}(T),\\
X_I(T)=& X(0)-2T-X_R(T)-X_S(T).
\end{align}

The contagion process will finish at the stopping time $T_*$ which is the first time $T\in \NN$ where $X_I(T)=0$. The final number of susceptible individual with type $t$, social distancing $s$, degree $d$ will be 
$$S^{(s)}_{t,d}(T_*) = \sum_{\theta=1}^\infty\sum_{\ell=0}^{\theta-1} S^{(s)}_{t,d,\theta, \ell}(T_*).$$
By definition of our process
$\mathbf{S}(T)=\left\{S^{(s)}_{t,d,\theta, \ell}(T)\right\}_{t,d,s,\theta, \ell}$ and $X_R(T)$ represent a Markov chain. We write the transition probabilities of the Markov chain. There are four possibilities for the $B$, the partner of a half-edge of an infected individual $A$:

\begin{enumerate}
\item $B$ is infected, the next state is $\mathbf{S}(T+1)=\mathbf{S}(T)$ and $X_R(T+1)=X_R(T)$;

\item $B$ is initially recovered. The probability of this event is $\frac{X_R(T)}{X(0)-2T}$. The changes for the next state will be $X_R(T+1)=X_R(T)-1$.

\item $B$ is uninfected of type $t$, degree $d$, social distancing strategy $s$, threshold $\theta$ and 
 this is the $(\ell+1)$-th deleted half-edge with $\ell+1 < \theta$. The probability of this event is $\frac{ (d-\ell) S^{(s)}_{t,d,\theta, \ell}(T)}{X(0)-2T}$. The changes for the next state will be
\begin{align*}
S^{(s)}_{t,d,\theta, \ell}(T+1) &= S^{(s)}_{t,d,\theta, \ell}(T) - 1 ,\\
S^{(s)}_{t,d,\theta, \ell+1}(T+1) &=S^{(s)}_{t,d,\theta, \ell}(T) + 1 .
\end{align*}

\item $B$ is uninfected of type $t$, degree $d$, social distancing strategy $s$, threshold $\theta$ and this is the $\theta$-th deleted incoming edge. The probability of this event is $\frac{ (d-\theta +1) S^{(s)}_{t,d,\theta, \ell}(T)}{X(0)-2T}$. The changes for the next state will be
\begin{align*}
S^{(s)}_{t,d,\theta, \theta-1}(T+1) &= S^{(s)}_{t,d,\theta, \theta-1}(T) - 1. 
\end{align*}
\end{enumerate}

Let $\Delta_T$ be the difference operator: $\Delta_T X := X(T+1)-X(T)$. We obtain the following equations for the expectation states variables, conditional on $\mathcal{F}_{T}$ (the pairing generated by time $T$), by averaging over the possible transitions:
\begin{eqnarray}
\EE\left[\Delta_T X_R |\mathcal{F}_{T} \right] &=& -\frac{X_R(T)}{X(0)-2T},\\
\EE\left[\Delta_T S^{(s)}_{t,d,\theta, 0} |\mathcal{F}_{T} \right] &=& - \frac{d S^{(s)}_{t,d,\theta, 0}(T)}{X(0)-2t} , \nonumber \\
\EE\left[\Delta_T S^{(s)}_{t,d,\theta, \ell} |\mathcal{F}_{T} \right] &=&\frac{(d-\ell+1) S^{(s)}_{t,d,\theta, \ell-1}}{X(0)-2t} - \frac{(d-\ell)S^{(s)}_{t,d,\theta, \ell}}{X(0)-2t}.
\label{eq:transitions}
\end{eqnarray}
The initial condition satisfies
\begin{eqnarray*}
X_R(0)/n \longrightarrow \lambda_R, \ \ \ S^{(s)}_{t,d,\theta, \ell}(0)/n \stackrel{p}{\longrightarrow} \alpha_S\mu_{t,d}^{(s)}p_{t,d}^{(s)} (\theta) \ind(\ell=0),
\end{eqnarray*}
as $n\to \infty$.
Remark that we are interested in the value of $S^{(s)}_{t,d,\theta, \ell}(T_*)$, where $T_*$ is the first time that $X_I(T_*)=0$. In case $T_* < X(0)$, the Markov chain can still be well defined for $t \in [T_*, X (0))$ by the same transition probabilities. However, after $T_*$ it will no longer be related to the epidemic process and the value $X_I(T)$, representing for $t \leq T_*$ the number of alive half-edges belonging to infected individuals, becomes negative. We consider from now on that the above transition probabilities hold for $ T < X(0)$.

\medskip

We will show next that the trajectory of these variables throughout the
algorithm is a.a.s. (asymptotically almost surely, as $n \rightarrow \infty$ ) close to the solution of the deterministic differential equations suggested by these equations.

\subsection{Fluid limit of the epidemic process}

 Consider the following system of differential equations (denoted by (DE)):

\begin{eqnarray*}
x_R'(\tau) &=& -\frac{x_R(\tau)}{\lambda-2\tau},\\
(s^{(s)}_{t,d,\theta, 0})'(\tau) &=& - \frac{d s^{(s)}_{t,d,\theta, 0}(\tau)}{\lambda-2\tau},\\
(s^{(s)}_{t,d,\theta, \ell})'(\tau) &=& \frac{(d-\ell+1) s^{(s)}_{t,d,\theta, \ell-1}(\tau)}{\lambda-2\tau} - \frac{(d-\ell) s^{(s)}_{t,d,\theta, \ell}}{\lambda-2\tau},\qquad ({\rm DE}),
\end{eqnarray*}
with initial conditions
\begin{eqnarray*}
x_R(0)=\lambda_R, \ \ s^{(s)}_{t,d,\theta, \ell}(0)&=& \alpha_S\mu_{t,d}^{(s)}p_{t,d}^{(s)} (\theta) \ind(\ell=0).
\end{eqnarray*}
\begin{lemma}\label{lem-sol}
The system of ordinary differential equations ({\rm DE}) admits the unique solution
$$x_R(\tau), \ \  \mathbf{s}(\tau) := \left\{s^{(s)}_{t,d,\theta, \ell}(\tau) \right\}_{s, t,d,\theta, \ell}$$
in the interval $0 \leq \tau < \lambda/2$, with
\begin{equation}
 x_R(\tau)=\lambda_R x, \ \  s^{(s)}_{t,d,\theta, \ell}(\tau) :=  \mu_{t,d}^{(s)}p_{t,d}^{(s)} (\theta) {d \choose \ell} x^{d-\ell}(1-x)^\ell,
 \label{solutions}
\end{equation}
where $x=\sqrt{1-2\tau/\lambda}$ and $0\leq \ell<\theta$.
\end{lemma}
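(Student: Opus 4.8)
The plan is to verify that the functions displayed in~\eqref{solutions} solve the system (DE) with the prescribed initial conditions, and then to invoke the standard existence--uniqueness theorem for linear ODEs to conclude that this is the \emph{only} solution on $[0,\lambda/2)$. The key simplification is the change of variable $x=\sqrt{1-2\tau/\lambda}$, which maps $\tau\in[0,\lambda/2)$ bijectively onto $x\in(0,1]$ and satisfies $\lambda-2\tau=\lambda x^2$ together with $dx/d\tau=-1/(\lambda x)$. Under this substitution every coefficient $1/(\lambda-2\tau)$ becomes $1/(\lambda x^2)$, and the chain rule turns each differential equation into an algebraic identity in $x$.

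For the existence part I would first treat $x_R$: with $x_R(\tau)=\lambda_R x$ one has $x_R'=\lambda_R\,dx/d\tau=-\lambda_R/(\lambda x)$, which equals $-x_R/(\lambda-2\tau)=-\lambda_R x/(\lambda x^2)$, and the initial value $x_R(0)=\lambda_R$ holds since $x=1$ at $\tau=0$. For the susceptible variables I would fix a block $(s,t,d,\theta)$, write the common prefactor as $c:=\alpha_S\mu^{(s)}_{t,d}p^{(s)}_{t,d}(\theta)$ so that the candidate reads $s_\ell(\tau)=c\binom{d}{\ell}x^{d-\ell}(1-x)^\ell$ and matches the initial condition $c\,\ind(\ell=0)$ at $x=1$ (using $0^0=1$ and $0^\ell=0$ for $\ell\ge 1$). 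Since the equations are linear in the state, the value of $c$ is immaterial to the check. Differentiating $x^{d-\ell}(1-x)^\ell$ and multiplying by $dx/d\tau=-1/(\lambda x)$ produces $s_\ell'$; after substituting $\lambda-2\tau=\lambda x^2$ on the right-hand side, matching the two sides collapses to the single binomial identity $(d-\ell+1)\binom{d}{\ell-1}=\ell\binom{d}{\ell}$. This confirms~\eqref{solutions} solves (DE). One may alternatively \emph{derive} rather than guess the solution: within each block the system is lower-bidiagonal in $\ell$, so solving recursively from $\ell=0$ upward yields the binomial form; the formula also has the transparent probabilistic reading that each of the $d$ half-edges of a node has independently been paired-and-removed with probability $1-x$, whence the number of removed half-edges is $\Bin(d,1-x)$.

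For uniqueness I would note that on any compact subinterval $[0,\lambda/2-\varepsilon]$ the coefficients $1/(\lambda-2\tau)$ are bounded and smooth, so the right-hand side of (DE) is globally Lipschitz in the state variables there; the Picard--Lindel\"of theorem then gives a unique solution on each such subinterval, and letting $\varepsilon\downarrow 0$ yields uniqueness on all of $[0,\lambda/2)$. The singularity of the coefficients at $\tau=\lambda/2$ (equivalently $x=0$) is harmless because that endpoint is excluded from the interval of interest.

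The computation is essentially bookkeeping, so there is no serious conceptual obstacle; the only points requiring care are the correct handling of the time-dependent singular coefficient $1/(\lambda-2\tau)$ through the $x$-substitution, and the binomial identity that makes the off-diagonal ``inflow'' term $(d-\ell+1)s_{\ell-1}$ and the diagonal ``outflow'' term $(d-\ell)s_\ell$ reconcile. I would present the verification for a generic block together with the $\ell=0$ case separately, the latter being the base of the recursion since it lacks the inflow term.
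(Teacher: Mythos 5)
Your proposal is correct, and it reaches the conclusion by a route that is equivalent in substance but different in logical structure from the paper's. The paper \emph{derives} the formula: it substitutes $u=-\tfrac12\ln(\lambda-2\tau)$, which turns (DE) into an autonomous constant-coefficient lower-bidiagonal linear system, solves that system by induction with integrating factors for \emph{arbitrary} initial data, and only then specializes to the given initial conditions; uniqueness is left implicit in this explicit integration. You instead \emph{verify} the stated formula under the substitution $x=\sqrt{1-2\tau/\lambda}$ (which is exactly $e^{-(u-u(0))}$ in the paper's variable), reducing the whole check to the identity $(d-\ell+1)\binom{d}{\ell-1}=\ell\binom{d}{\ell}$, and you supply uniqueness separately via Picard--Lindel\"{o}f on compact subintervals $[0,\lambda/2-\varepsilon]$. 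Your version is shorter, more elementary, and makes explicit the uniqueness step the paper glosses over; the paper's derivation, in exchange, explains where the binomial form comes from and yields the general solution for any initial condition. Two points to tighten in your write-up: (i) Picard--Lindel\"{o}f is a finite-dimensional theorem, while (DE) has infinitely many components (all $d\in\NN$, $t\in\cT$, $\theta\leq d$); you exploit the block structure for existence, but your uniqueness sentence treats (DE) as a whole, so you should state explicitly that the system decouples into the scalar $x_R$ equation and finite lower-bidiagonal blocks indexed by $(s,t,d,\theta)$ and apply the theorem blockwise; (ii) your prefactor $c=\alpha_S\mu^{(s)}_{t,d}p^{(s)}_{t,d}(\theta)$ is the one consistent with the initial conditions of (DE) --- note that the lemma's displayed formula omits the factor $\alpha_S$, which is a typo in the paper, since the paper's own proof concludes with $\alpha_S\mu_{t,d}^{(s)}p_{t,d}^{(s)}(\theta)\binom{d}{\ell}x^{d-\ell}(1-x)^{\ell}$, matching your constant $c$.
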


\begin{proof}
Let $u=u(\tau)=-\frac 1 2 \ln (\lambda-2\tau)$. Note that $u(0)=-\frac{1}{2}\ln(\lambda)$, $u$ is strictly monotone and so is the inverse function $\tau=\tau(u)$. We write the system of differential equations with respect to $u$:
\begin{eqnarray*}
x_R'(u) &=& -x_R(u),\\
(s^{(s)}_{t,d,\theta, 0})'(u) &=& - d s^{(s)}_{t,d,\theta, 0}(u),\\
(s^{(s)}_{t,d,\theta, \ell})'(u) &=& (d-\ell+1) s^{(s)}_{t,d,\theta, \ell-1}(u)- (d-\ell) s^{(s)}_{t,d,\theta, \ell}(u).
\end{eqnarray*}

Then we have $$x_R(u)=\lambda_Re^{-(u-u(0))}=\frac{\lambda_R}{\lambda}\frac{\sqrt{\lambda-2\tau}}{\sqrt{\lambda}}=\lambda_R x,$$
$$\frac{d}{du} (s^{(s)}_{t,d,\theta, \ell+1}e^{(d-\ell-1)(u-u(0))}) = (d-\ell) s^{(s)}_{t,d,\theta, \ell}(u)
e^{(j-\ell-1)(\gamma-\gamma(0))},$$
and by induction, we find
\begin{eqnarray*}
s^{(s)}_{t,d,\theta, \ell}(u) = e^{-(d-\ell)(u-u(0))}  \sum_{r=0}^\ell {{d-r}\choose{\ell-r}} \left( 1 - e^{-(u-u(0))} \right)^{\ell-r} s^{(s)}_{t,d,\theta, r}(u(0)).
\end{eqnarray*}
By going back to $\tau$, we have
\begin{eqnarray*}
s^{(s)}_{t,d,\theta, \ell}= x^{d-\ell} \sum_{r=0}^\ell s^{(s)}_{t,d,\theta, r}(0) {{d-r}\choose{\ell-r}}(1-x)^{\ell-r}.
\end{eqnarray*}
Then, by using the initial conditions, we find (for $0\leq \ell<\theta$)
$$s^{(s)}_{t,d,\theta, \ell}(\tau) := \alpha_S \mu_{t,d}^{(s)}p_{t,d}^{(s)} (\theta) {d \choose \ell} x^{d-\ell}(1-x)^\ell.$$

\end{proof}

A key idea to prove Theorem~\ref{thm-main} is to approximate, following   \cite{Worm95}, the
  Markov chain by the solution of a system of differential equations in the large network limit. We summarize here the main result of \cite{Worm95}.

For a set of variables $x^1, ..., x^b$ and for $\cD \subseteq \RR^{b+1}$,
define the stopping time $$T_{\cD}=T_{\cD}(x^1, ..., x^b)=\inf\{t\geq 1, (t/n; x^1(t)/n, ..., x^b(t)/n) \notin \cD\}.$$
\begin{lemma}[\cite{Warnke2019, Worm95}]
\label{thm-eqdif1}
Given integers $b, n\geq 1$, a bounded domain $\cD\subseteq \RR^{b+1}$, functions  $\left(f_\ell\right)_{1\leq \ell \leq b}$ with $f_{\ell} : \cD \to \RR$, and $\sigma$-fields  $\mathcal{F}_{n,0}\subseteq \mathcal{F}_{n,1} \subseteq \dots$, suppose that the random variables $\left(Y_n^{\ell}(t)\right)_{1 \leq {\ell} \leq b}$ are  $\mathcal{F}_{n,t}$-measurable for $t\geq 0$. Furthermore, assume that, for all $0 \leq t < T_{\cD}$ and $1\leq \ell\leq b$, the following conditions hold

%Let $b\geq 2$ be an integer and consider a sequence of real valued random variables $(\{Y_n^{\ell}(t)\}_{1 \leq {\ell} \leq b})_{t\geq 0}$ and its natural filtration $\mathcal{F}_{n,t}$. Assume that there is a constant $C_0 > 0$ such that $|Y_n^{\ell}(t)| \leq C_0 n$ for all $n$, $t \geq 0$ and $1 \leq {\ell} \leq b$.
%For all $\ell \geq 1$ let $f_{\ell} : \RR^{b+1} \to \RR$ be functions and assume that for some bounded connected open set $\cD \subseteq \RR^{b+1}$ containing the
%% intersection of $\{(t,z_1,...,z_b): t\geq 0\}$ with some neighborhood of
%closure of
%$$\{(0,z_1,...,z_{b}):\exists \ n \mbox{ such that }\PP(\forall \ 1\leq {\ell} \leq b, \ Y_n^{\ell}(0)=z_{\ell} n) \neq 0 \},$$
%the following three conditions are verified:
\begin{itemize}
  \item[(i)] {\rm(Boundedness).} $\max_{1 \leq {\ell} \leq b} |Y_n^{\ell}(t+1)-Y_n^{\ell}(t)|\leq \beta,$
  \item[(ii)] {\rm(Trend-Lipschitz).} $|\EE[Y_n^{\ell}(t+1)-Y_n^{\ell}(t)|\mathcal{F}_{n,t}]-f_{\ell}(t/n,Y_n^1(t)/n,...,Y_n^{\ell}(t)/n)| \leq \delta$, where the function $(f_{\ell})$ is $L$-Lipschitz-continuous on $\cD$,
%  \item {\rm(Lipschitz).}  The functions $(f_{\ell})_{1 \leq {\ell} \leq b}$ are Lipschitz-continuous  on $\cD$.
\end{itemize}
and that the following condition holds initially:
\begin{itemize}
\item[(iii)] {\rm(Initial condition).} $\max_{1 \leq {\ell} \leq b} |Y_n^{\ell}(0)-\hat{y}^{\ell}n|\leq \alpha n,$ for some $\left(0,\hat{y}^1, \dots, \hat{y}^b\right) \in \cD$.
\end{itemize}
Then there are $R =R(\cD, L) \in [1, \infty)$ and $C=C(\cD)\in (0,\infty)$ such that, whenever $\alpha \geq \delta \min\{C, L^{-1}\}+R/n$, with probability at least  $1-2be^{-n\alpha^2/(8C\beta^2)}$ we have
$$\max_{0 \leq t \leq \sigma n}\max_{1\leq \ell \leq b}|Y_n^{\ell}(t) -  x^{\ell}(t/n)n| < 3 e^{CL}\alpha n,$$
where $\left(x^{\ell}(t)\right)_{1 \leq {\ell} \leq b}$ is the unique solution to the system of differential equations
$$\frac{dx^{\ell}(t)}{dt}=f_{\ell}(t, x^1,..., x^{b}) \ \ \text{with} \ \ x^\ell(0) = \hat{y}^\ell, \ \ \text{for} \ \ \ell=1,...,b ,$$
and $\sigma = \sigma(\hat{y}^1, \dots, \hat{y}^b) \in[0,C]$ is any choice of $\sigma\geq 0$ with the property that $(t, x^1(t),..., x^{b}(t))$ has $\ell^\infty$-distance at least $3e^{LC} \alpha$ from the boundary of $\cD$ for all $t\in [0, \sigma)$.
%Then the following conclusions hold:
%\begin{description}
%  \item[(a)] For $(0,\hat{z}_1,...,\hat{z}_{b}) \in D$, the system of differential equations
%  $$\frac{dz_{\ell}}{ds}=f_{\ell}(s,z_1,...,z_{\ell}), \ \ \ell=1,...,b ,$$ has a unique solution in $\cD$, $z_{\ell}:\RR \rightarrow \RR$, which passes through $z_{\ell}(0)=\hat{z}_{\ell}$, for $\ell=1,\dots,b$, and which extends to points arbitrarily close to the boundary of $\cD$.
%  \item[(b)] Let $\delta>\delta_1(n)$ with $\delta=o(1)$.
%  For a sufficiently large constant C, with probability $1-O\left(\frac{b\beta(n)}{\delta} \exp \left( - \frac{n\delta^3}{\beta(n)^3}\right)\right)$, we have
% $$\sup_{0 \leq t \leq \sigma(n) n}(Y_n^{\ell}(t) - n z_n^{\ell}(t/n)) = O(\delta n),$$ where $\mathbf{z}_n(t)=(z_n^1(t),\dots,z_n^b(t))$ is the solution of  $$ \frac{d\mathbf{z}_n}{dt}=f(t,\mathbf{z}_n(t))\qquad {z}_n(0) = \mathbf{Y}_n(0)/n$$  $${\rm and}\qquad\sigma(n)=\sup \{ t\geq 0,\quad d_{\infty}(\mathbf{z}_n(t),\partial D)\geq C\delta\}.$$
%\end{description}
\end{lemma}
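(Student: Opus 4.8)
The plan is to prove this as the standard form of Wormald's differential equation method, following the martingale-plus-Gronwall strategy of \cite{Warnke2019}; the statement is purely probabilistic and uses none of the epidemic structure, so I would treat $(Y_n^\ell)$ as abstract $\mathcal{F}_{n,t}$-adapted processes obeying (i)--(iii) and forget the SIR interpretation entirely. The only analytic input is that, since each $f_\ell$ is $L$-Lipschitz on the bounded domain $\cD$ (hence bounded there), Picard--Lindel\"of gives a unique solution $(x^\ell)$ on $[0,\sigma]$ whose derivative $(x^\ell)'=f_\ell(\cdot,x(\cdot))$ is itself Lipschitz, yielding the one-step Taylor estimate $x^\ell((t+1)/n)-x^\ell(t/n)=n^{-1}f_\ell(t/n,x(t/n))+O(n^{-2})$. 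This is the only place the smoothness of the fluid limit enters.

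Next I would introduce the rescaled deviation $w^\ell(t):=Y_n^\ell(t)/n-x^\ell(t/n)$ and decompose its one-step increment into a predictable drift and a martingale difference. Combining the trend hypothesis (ii) with the Taylor estimate, the conditional mean increment satisfies
\[
\EE[w^\ell(t+1)-w^\ell(t)\mid\mathcal{F}_{n,t}]=\frac1n\big(f_\ell(t/n,Y_n(t)/n)-f_\ell(t/n,x(t/n))\big)+O\!\big((\delta+1/n)/n\big),
\]
which by the Lipschitz property is at most $L\|w(t)\|_\infty/n+O((\delta+1/n)/n)$ in absolute value, while hypothesis (i) bounds the purely random part of the increment by $2\beta/n$.

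The core difficulty, which I expect to be the main obstacle, is that this drift is self-referential: the bound for a single coordinate involves $\|w\|_\infty$ over all coordinates, so the coordinates are coupled and a direct coordinatewise martingale argument fails. The remedy is a Gronwall-type supermartingale: for each coordinate $\ell$ and each sign I would study the amplified deviation $e^{-Lt/n}(\pm w^\ell(t))$ minus an accumulated deterministic correction that absorbs the $O((\delta+1/n)/n)$ terms, and verify, using the drift bound together with $e^{-L/n}\le 1-L/n+O(n^{-2})$, that it is a supermartingale up to the first time any $|w^\ell|$ reaches the target $3e^{CL}\alpha$. The exponential prefactor simultaneously linearises the coupling between coordinates and produces the amplification $e^{CL}$ appearing in the conclusion.

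Finally I would apply Azuma--Hoeffding to the martingale part of each of these $2b$ supermartingales over the $\le\sigma n\le Cn$ steps: with per-step increments $O(\beta/n)$, each deviates by more than $\alpha$ with probability at most $2e^{-n\alpha^2/(8C\beta^2)}$, and a union bound over the $b$ coordinates and two signs yields the factor $2b$ in the stated probability. On the complementary event the deterministic corrections (bounded by $\delta\min\{C,L^{-1}\}+R/n\le\alpha$ by the standing hypothesis on $\alpha$) together with the $e^{CL}$ amplification give $\max_{t}\max_\ell|Y_n^\ell(t)-x^\ell(t/n)n|<3e^{CL}\alpha n$. A continuity and stopping-time argument then upgrades this bound, which a priori only holds up to the first exit, to one valid for all $t\le\sigma n$: because $\sigma$ is chosen so the fluid trajectory stays at $\ell^\infty$-distance $\ge 3e^{LC}\alpha$ from $\partial\cD$, on the good event the rescaled process never leaves $\cD$, whence $T_\cD>\sigma n$ and the maximum is genuinely over the whole range.
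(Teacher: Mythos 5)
First, a point of comparison: the paper never proves this lemma at all --- it is imported as a black box from \cite{Warnke2019, Worm95} --- so your proposal can only be measured against the standard proof in the cited source, which is indeed the martingale-plus-Gronwall argument you outline. Your overall architecture agrees with it: Azuma--Hoeffding applied per coordinate to the drift-subtracted process (increments at most $2\beta$, horizon at most $Cn$ steps, giving exactly $2e^{-n\alpha^2/(8C\beta^2)}$ per coordinate and the factor $2b$ by a union bound), the splitting of the final error into initial $\alpha$ plus martingale $\alpha$ plus drift-accumulation $\alpha$ (whence the constant $3$), and the closing argument that the fluid trajectory stays $3e^{LC}\alpha$ away from $\partial\cD$ so the process cannot exit before time $\sigma n$.

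However, the step you yourself flag as the core difficulty is exactly where your construction breaks. The drift bound for coordinate $\ell$ is $L\|w(t)\|_\infty/n + O\bigl((\delta+1/n)/n\bigr)$, with the sup-norm over \emph{all} coordinates, whereas the discount $e^{-Lt/n}$ applied to $w^\ell$ only generates the compensating term $-(L/n)\,e^{-L(t+1)/n}\,w^\ell(t)$; the uncompensated remainder is proportional to $\|w(t)\|_\infty - w^\ell(t)$, which is not $\le 0$ (take $w^\ell(t)=0$ while another coordinate sits near its maximum). So $e^{-Lt/n}\bigl(\pm w^\ell(t)\bigr)$ minus corrections absorbing only the $\delta$-terms is simply not a supermartingale: the exponential prefactor does \emph{not} linearise the cross-coordinate coupling. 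Nor can you rescue it by stopping at the flat target $3e^{CL}\alpha$ and using the worst-case drift $3Le^{CL}\alpha/n$: accumulated over up to $Cn$ steps this contributes $3CLe^{CL}\alpha$, and $3\alpha + 3CLe^{CL}\alpha \le 3e^{CL}\alpha$ would force $1 + CLe^{CL}\le e^{CL}$, which is false for every $CL>0$. The repair --- and this is essentially what \cite{Warnke2019} does --- is to stop not at the flat target but at the time-dependent Gronwall envelope $B(s)=2\alpha e^{Ls}+\frac{\delta+R/n}{L}\bigl(e^{Ls}-1\bigr)$, the solution of $B'=LB+\delta+R/n$, $B(0)=2\alpha$. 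Up to that stopping time the drift of every coordinate is bounded by the \emph{deterministic} function $\bigl(LB(s)+\delta+R/n\bigr)/n$, no exponential prefactor is needed, Azuma applies to the drift-subtracted process, and the inequality $2\alpha+\int_0^s\bigl(LB(u)+\delta+R/n\bigr)\,du = B(s)$ closes the induction, with $B(\sigma)\le 3e^{LC}\alpha$ following from the hypothesis $\alpha\ge\delta\min\{C,L^{-1}\}+R/n$. With that replacement your outline goes through; as written, the supermartingale at its heart does not exist.
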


We apply Lemma~\ref{thm-eqdif1} to the epidemic process described in Section \ref{sec-markov}.
Let us define, for $0 \leq \tau \leq \lambda/2$,
\begin{align}
x_S(\tau)=& \sum\limits_{s\in \cS} \sum\limits_{t\in \cT} \sum\limits_{d=0}^{\infty}\sum_{\theta=1}^d\sum_{\ell=0}^{\theta-1} (d-\ell) s^{(s)}_{t,d,\theta, \ell}(\tau),\\
x_I(\tau)=& \lambda-2\tau-x_R(\tau)-x_S(\tau). \label{eq:xI}
\end{align}
with $s^{(s)}_{t,d,\theta, \ell}$ and $x_R$ given in Lemma \ref{lem-sol}.
With $\Bin(d,x)$ denoting a binomial variable with parameters $d$ and $x$, we have
\begin{equation}
\label{eq:delt}
x_S(\tau) =   \alpha_S\sum\limits_{s\in \cS} \sum\limits_{t\in \cT} \sum\limits_{d=0}^{\infty}\sum_{\theta=0}^d   \mu_{t,d}^{(s)}p_{t,d}^{(s)} (\theta) (dx) \PP\left(\Bin(d-1,1-x) \leq \theta-1 \right),
\end{equation}
and, using $x=\sqrt{1-2\tau/\lambda}$ and Equation~\ref{eq:xI},
\begin{eqnarray*}
x_I(\tau)&=& \lambda-2\tau-\lambda_R x- \alpha_S\sum\limits_{s\in \cS} \sum\limits_{t\in \cT} \sum\limits_{d=0}^{\infty}\sum\limits_{\theta=1}^d   \mu_{t,d}^{(s)}p_{t,d}^{(s)}(\theta) (dx) \PP\left(\Bin(d-1,1-x) \leq \theta-1 \right)\\
&=&\lambda x\left(x-\frac{\lambda_R}{\lambda}- \alpha_S\sum\limits_{s\in \cS} \sum\limits_{t\in \cT} \sum\limits_{d=0}^{\infty}\sum\limits_{\theta=1}^d  \frac{d  \mu_{t,d}^{(s)}}{\lambda} p_{t,d}^{(s)}(\theta)  \PP\left(\Bin(d-1,1-x) \leq \theta-1 \right)  \right)\\
&=&(\lambda x) \left(x-f^{(\bs)}(x)\right).
\end{eqnarray*}

Since $x_*$ is the largest solution in $(0,1)$ to the fixed point equation $x=f^{(\bs)}(x)$, we have $x_*=\sqrt{1-2\tau_*/\lambda}$ where $\tau_* $ is the smallest $\tau \in (0,\lambda/2)$ such that $x_I(\tau)=0$.

\subsection{Proof of Theorem~\ref{thm-independent}}
We now proceed to the proof of Theorem \ref{thm-independent}. We base the proof on Lemma \ref{thm-eqdif1}. 

We first need to bound the contribution of higher order terms in the infinite sums (\ref{eq:delt}). Fix  $\epsilon>0$.
By Condition $(C_3)$, 
\begin{eqnarray*}
\lambda_S=\sum_{s\in \cS}\sum_{t \in \cT}\sum_{d=0}^{\infty}d \mu^{(s)}_{t,d}< \infty 
\end{eqnarray*}
Then, there exists an integer $\Delta_{\epsilon}$, such that $$\sum_{s\in \cS}\sum_{t \in \cT}\sum_{d=\Delta_\epsilon}^{\infty}d \mu^{(s)}_{t,d} < \epsilon,$$ which implies that for all $0 \leq \tau \leq \lambda/2$,
\begin{equation}
 \sum\limits_{s\in \cS} \sum\limits_{t\in \cT} \sum\limits_{d=\Delta_\epsilon}^{\infty}\sum\limits_{\theta=1}^d  d  \mu_{t,d}^{(s)} p_{t,d}^{(s)}(\theta)  \PP\left(\Bin(d-1,1-x) \leq \theta-1 \right)  < \epsilon.
\label{limitdelta_}
\end{equation}

Recall that the number of susceptible vertices with type $t\in \cT$, social distancing $s\in \cS$ and degree $d$ is $n_{S,t,d}^{(s)}$. Again by condition $(C_3)$, 
$$\sum_{s\in \cS}\sum_{t \in \cT}\sum_{d=0}^{\infty}d n^{(s)}_{S,t,d}/n\to \lambda_S <\infty.$$
Therefore, for $n$ large enough, $\sum_{s\in \cS}\sum_{t \in \cT}\sum_{d=\Delta_\epsilon}^{\infty}d n^{(s)}_{S,t,d}/n <\epsilon.$
and for all $0 \leq  T \leq \frac{X(0)}{2}$,
\begin{equation}
 \sum_{s\in \cS}\sum_{t \in \cT}\sum_{d=\Delta_\epsilon}^{\infty}\sum_{\theta=1}^{\infty}\sum_{\ell=0}^{\theta-1}d S^{(s)}_{t,d, \theta, \ell}(T)/n <\epsilon.
\label{limitDn}
\end{equation}

For $\Delta \geq 1$, we denote
\begin{align*}
{\bf y}^\Delta &:= \left(x_R(\tau), s^{(s)}_{t,d, \theta, \ell}(\tau)\right)_{d< \Delta, \ s\in\cS, \ 0 \leq \ell < \theta \leq d} \mbox{ and } \\
Y_n^\Delta &:= \left(X_R(T), S^{(s)}_{t,d}(T)\right)_{d< \Delta, \ s\in\cS, \ 0 \leq \ell < \theta \leq d},
\end{align*}
 both of dimension   $b(\Delta)$, and $x_R(\tau), s^{(s)}_{t,d, \theta, \ell}(\tau)$ are solutions to a system (${\rm DE}$) of ordinary differential equations.
Let $$x_*^{(\bs)} = \max\{x \in [0,1]: f^{(\bs)}(x) = x\}.$$
For the arbitrary constant $\epsilon > 0$ fixed above, we define the domain $\cD_{\epsilon}$ as
\begin{align}
D_{\epsilon}=\{\left(\tau, {\bf y}^{K_{\epsilon}} \right) \in \RR^{b(K_{\epsilon})+1} \ : \    -\epsilon <  \tau < \lambda/2 - \epsilon \ , \ -\epsilon < x_R(\tau)<\lambda, -\epsilon<s^{(s)}_{t,d, \theta, \ell}(\tau) < 1\}.
\label{uepsilon}
\end{align}
The domain $\cD_{\epsilon}$ is a bounded open set which contains the support of all initial values of the variables. Each variable is bounded by a constant times $n$ ($C_0 = 1$). By the definition of our process, the Boundedness condition is satisfied with $\beta = 1$.
The second condition of the theorem is satisfied by some $\delta_n=O(1/n)$. Finally the Lipschitz property  is also satisfied since $\lambda-2\tau$ is bounded away from zero. Then by Lemma~\ref{thm-eqdif1} and by convergence of initial conditions, we have :

\begin{corollary} \label{lem-DE}
For a sufficiently large constant $C$, we have
\begin{eqnarray}
\label{eq:diffmethod}
\PP(\forall t \leq n \sigma_H(n), \mathbf{Y}_n^{K_{\epsilon}}(t)=n\mathbf{y}^{K_{\epsilon}}(t/n)+O(n^{3/4})) = 1 - O(b(K_{\epsilon})n^{-1/4}\exp(-n^{-1/4}))
\end{eqnarray}
 uniformly for all $t \leq n\sigma_H(n)$ where
$$\sigma_H(n)=\sup\{\tau\geq 0,  d( \mathbf{y}^{K_{\epsilon}}(\tau),\partial D_{\epsilon}\ )\geq  Cn^{-1/4} \}.$$
\end{corollary}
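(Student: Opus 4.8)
The plan is to derive the corollary as a direct application of the differential-equation method, Lemma~\ref{thm-eqdif1}, to the Markov chain $\left(\mathbf{S}(T),X_R(T)\right)$ whose conditional increments are recorded in~\eqref{eq:transitions}. Fix $\epsilon>0$ and pass to the finite-dimensional truncation retaining only coordinates of degree $d<K_\epsilon$, so that $Y_n^{K_\epsilon}$ has dimension $b(K_\epsilon)$ that is constant in $n$. The deterministic trajectory against which we compare is the unique solution $\mathbf{y}^{K_\epsilon}$ of the system $({\rm DE})$, already computed in closed form in Lemma~\ref{lem-sol}. The task then reduces to checking hypotheses (i)--(iii) of Lemma~\ref{thm-eqdif1} on the bounded open domain $D_\epsilon$ of~\eqref{uepsilon} and optimizing the free parameter $\alpha$.

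Condition (i) (boundedness) holds with $\beta=1$: a single step pairs one infected half-edge with a partner and deletes exactly those two half-edges, so every coordinate of $Y_n^{K_\epsilon}$ moves by at most one. Condition (ii) (trend-Lipschitz) follows because the conditional expected increments are given exactly by~\eqref{eq:transitions}, and since $X(0)-2T=n(\lambda-2\tau)+o(n)$ with $\tau=T/n$, these match the drift functions $f_\ell$ of $({\rm DE})$ up to an additive error $\delta=O(1/n)$; on $D_\epsilon$ the denominator obeys $\lambda-2\tau>2\epsilon$, bounded away from the singularity at $\tau=\lambda/2$, so each $f_\ell$ is $L$-Lipschitz with $L=L(\epsilon)$. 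Condition (iii) (initial condition) follows from $(C_1)$--$(C_3)$ and the threshold law~\eqref{eq:threshold}, which force the initial fractions to converge to the point $\left(\lambda_R,\ \alpha_S\mu_{t,d}^{(s)}p_{t,d}^{(s)}(\theta)\ind(\ell=0)\right)\in D_\epsilon$.

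With the hypotheses verified I would invoke Lemma~\ref{thm-eqdif1} with $\alpha=n^{-1/4}$, which is admissible since $\delta\min\{C,L^{-1}\}+R/n=O(1/n)\le\alpha$ for large $n$. The lemma then delivers, with probability at least $1-2b(K_\epsilon)e^{-n\alpha^2/(8C\beta^2)}=1-2b(K_\epsilon)e^{-\sqrt{n}/(8C)}$, the uniform estimate $\max_t\max_\ell|Y_n^\ell(t)-n\,y^\ell(t/n)|<3e^{CL}\alpha n=O(n^{3/4})$ over the relevant horizon. Taking the corollary's constant $C$ larger than $3e^{LC'}$, where $C'$ is the constant supplied by the lemma, makes the margin $Cn^{-1/4}$ dominate the required $3e^{LC'}\alpha$, so the stopping horizon $\sigma_H(n)$ is exactly an admissible choice of $\sigma$; this yields~\eqref{eq:diffmethod}, the exponentially small failure probability $2b(K_\epsilon)e^{-\sqrt{n}/(8C)}$ being in particular $O\!\left(b(K_\epsilon)n^{-1/4}\exp(-n^{-1/4})\right)$.

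The genuine difficulties lie not in any single verification but in two approximations that surround it. First, the system is truly infinite-dimensional, while Lemma~\ref{thm-eqdif1} applies only to the truncation; the tail bounds~\eqref{limitdelta_} and~\eqref{limitDn}, which control the aggregate weight of degrees $d\ge\Delta_\epsilon$ by $\epsilon$ uniformly in $\tau$ (respectively $T$), are exactly what guarantee that discarding the high-degree coordinates perturbs the observables of interest, such as $x_S(\tau)$, by only $O(\epsilon)$. Second, the drift blows up as $\tau\uparrow\lambda/2$, so one cannot track the chain on the whole interval simultaneously; the stopping time $\sigma_H(n)$ confines the analysis to $D_\epsilon$, and the delicate final step---carried out when deducing Theorem~\ref{thm-independent}---is to let $\epsilon\downarrow0$ so that $\sigma_H(n)$ approaches $\tau_*$ (equivalently $x\downarrow x_*^{(\bs)}$) while the truncation error $O(\epsilon)$ and the boundary margin $Cn^{-1/4}$ both vanish, thereby matching the chain's terminal state to the fixed point of $f^{(\bs)}$. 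I expect this passage to the limit, reconciling the two error scales, to be the main obstacle.
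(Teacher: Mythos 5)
Your proposal is correct and follows essentially the same route as the paper: truncate to the finitely many coordinates with degree below the $\epsilon$-threshold, verify Boundedness ($\beta=1$), Trend-Lipschitz ($\delta=O(1/n)$, with the Lipschitz constant controlled because $\lambda-2\tau$ stays bounded away from zero on $D_\epsilon$), and the initial-condition hypothesis via $(C_1)$--$(C_3)$, then invoke Lemma~\ref{thm-eqdif1} with $\alpha=n^{-1/4}$ to obtain the $O(n^{3/4})$ deviation and a failure probability that dominates the one stated. Your explicit tracking of the constant so that $\sigma_H(n)$ is an admissible stopping horizon, and your observation that the truncation bounds~\eqref{limitdelta_}--\eqref{limitDn} and the limit $\epsilon\downarrow 0$ belong to the deduction of Theorem~\ref{thm-independent} rather than to the corollary itself, are both consistent with (and slightly more detailed than) the paper's argument.
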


When the solution reaches the boundary of $\cD_{\epsilon}$, it violates the first constraint, determined by $\hat{\tau} = \lambda/2 -\epsilon$.
By convergence of $\frac{X(0)}{n}$ to $\lambda$, there is a value $n_0$ such that $\forall n \geq n_0$, $\frac{X(0)}{n} > \lambda - \epsilon$, which ensures that $\hat{\tau}n\leq X(0)/2$.

Using (\ref{limitdelta_}) and (\ref{limitDn}), we have, for $0 \leq T=n\tau \leq n \hat{\tau}$ and $n \geq n_0$:
\begin{eqnarray}
\left| X_I(T)/n - x_I(\tau) \right| &\leq& |X(0)/n-\lambda| + |X_R(T)/n-x_R(\tau)| \\
&&+ \sum_{s\in \cS}\sum_{t \in \cT}\sum_{d=0}^{\infty}\sum_{\theta=1}^{\infty}\sum_{\ell=0}^{\theta-1}d |S^{(s)}_{t,d, \theta, \ell}(T)/n - s^{(s)}_{t,d, \theta, \ell}(\tau)| \nonumber\\
&\leq& \sum_{s\in \cS}\sum_{t \in \cT}\sum_{d=0}^{\Delta_\epsilon}\sum_{\theta=1}^{\infty}\sum_{\ell=0}^{\theta-1}d |S^{(s)}_{t,d, \theta, \ell}(T)/n - s^{(s)}_{t,d, \theta, \ell}(\tau)| + 3 \epsilon.
\label{limitDiff}
\end{eqnarray}
and similarly, the total number of susceptible individuals at time $T$ satisfies
\begin{eqnarray}
\left| S(T)/n - s(\tau) \right| &\leq& \sum_{s\in \cS}\sum_{t \in \cT}\sum_{d=0}^{\Delta_\epsilon}\sum_{\theta=1}^{\infty}\sum_{\ell=0}^{\theta-1}|S^{(s)}_{t,d, \theta, \ell}(T)/n - s^{(s)}_{t,d, \theta, \ell}(\tau)| + 3 \epsilon.
\label{dn},
\end{eqnarray}
where, by Lemma~\ref{lem-sol}, 
\begin{eqnarray}
s(\tau)&=&\sum_{s\in \cS}\sum_{t \in \cT}\sum_{d=0}^{\infty}\sum_{\theta=1}^{\infty}\sum_{\ell=0}^{\theta-1}s^{(s)}_{t,d, \theta, \ell}(\tau) \\
&=& \alpha_S \sum\limits_{s\in \cS} \sum\limits_{t\in \cT} \sum\limits_{d=0}^{\infty}\sum\limits_{\theta=1}^d   \mu_{t,d}^{(s)} p_{t,d}^{(s)}(\theta)  \PP\left(\Bin(d,1-x) \leq \theta-1 \right).
\end{eqnarray}

We obtain by  Corollary~\ref{lem-DE} that 
\begin{eqnarray}
\sup_{T \leq \hat{\tau} n} \left| X_I(T)/n - x_I(\tau) \right| \leq 2\epsilon + o_L(1), \ {\rm and}\\
\sup_{T \leq \hat{\tau} n} \left| S(T)/n - s(\tau) \right| \leq 2\epsilon + o_L(1).
\label{convergence}
\end{eqnarray}
We now study the stopping
time $T_n$ and the size of the epidemic $|\cR^{\bs)}(\infty)/\cR(0)|$.

Consider $x_*=\sqrt{1-2\tau_*/\lambda}$ is a stable fixed point of $f^{(\bs)}(x)$. Then by definition of $x_*$ and by using the fact that $f^{(\bs)}(1) \leq 1$, we have $f^{(\bs)}(x)>x$
for some interval $(x_*-\tilde{x},x_*)$. Then 
$$
x_I(\tau)=(\lambda x) \left(x-f^{(\bs)}(x)\right)$$
is negative in an interval $(\tau_*, \tau_* + \tilde{\tau})$. 
Let  $\epsilon$   such that $2 \epsilon < -\inf_{\tau \in (\tau_*, \tau_* + \tilde{\tau})}x_I(\tau)$ and denote $\hat{\sigma}$ the first iteration at which it reaches the minimum.
Since $x_I(\hat{\sigma}) < -2 \epsilon$ it follows that with high probability $X_I(\hat{\sigma} n)/n < 0$,
so $T_n/n = \tau_*  + O(\epsilon) + o_L(1)$. The conclusion follows by taking the limit $\epsilon\to 0$.

\section{Conclusion}\label{sec:conclusion}
We have studied a heterogeneous SIR epidemic process when a network underlies social contact.
For given social distancing strategies, we have established results on the amplification of the epidemic.  
Quantities such as the epidemic reproduction number $\fR_0$ are established and can be used as a warning signal to identify for example parts of the networks that are highly vulnerable. Vaccination and targeted social distancing can be applied  to make $\fR_0$ smaller than one.
Next, we have studied the equilibrium of the social distancing game.  Our theoretical results establish that the voluntary social distancing will always fall short of the social optimum. The social optimum itself is of course dependent on the type.

In a companion paper \cite{aminca20}, we calibrate the model to the characteristics of the Covid-19 epidemic. The dependence of the death rates on the fraction of the population that is infected plays a significant role in the gap between laissez-faire equilibrium and social optimum.

\medskip

%\paragraph{The cure.} Let $p_c$ denotes thare e probability that a cure is found after one year!  

Several directions emerge from our present study.

\medskip

%\paragraph{Optimal vaccination strategies.}
When vaccines are available but in limited supply and with challenges in distribution, an important question regards the vaccination strategy.
The social planner needs to maximize social utility under  constraints on vaccine quantity.
For a given social distancing strategy of the population, the social planner's optimization problem writes as
$$\max \sum_{i=1}^n  u_i ({\bf \ell}, {\bf s}) \ \ \text{such that} \ \ \sum_{s\in \cS}\sum_{t \in \cT}\sum_{d=0}^{\infty} \omega_{t,d} \mu_{t,d}^{(s)} \leq V $$
The main challenge remains solving a joint equilibrium problem, where agents adapt their social distancing strategy to the vaccination policy and vice-versa. In this case, a moral hazard problem may then emerge.

\medskip

%\paragraph{Individuals' optimal connectivity choice in equilibrium.}
In this paper we have considered that the degrees are given and agents choose the social distancing strategies. As in \cite{amini2019dynamic} (for a different model), we can consider the connectivity choice of every individual as resulting from a network equilibrium, in the presence of contagion risk. Each individual would then choose their connectivity (given all individuals' connectivity)
as follows
$$d^*_i=\arg\min_{d=0,1, \dots, \Delta_{t_i}} u_i ({\bf \ell}, {\bf d}),$$
where the individual utility may be written as
\begin{equation}
 u_i ({\bf \ell}, {\bf d}) =  u_i(\ell_1, ..., \ell_n, d_1, ..., d_n)  :=  \pi_{t_i}d_i - \ell_i \kappa_{t_i}\PP_n (i \in \iI^{(\bf d)} (\infty)).
\end{equation}

\medskip

Finally, we have worked in a one period social distancing game, where individuals keep their strategy constant. In a dynamic version of the game, agents (individuals) may change their strategies over time. 
 We leave this for a future work.

\bibliographystyle{acm}
\bibliography{biblio}

\end{document}